%% file: main.tex
\documentclass[11pt]{article}
\usepackage[margin=1in]{geometry}
\usepackage{lmodern}
\usepackage{authblk}
\usepackage{amsthm}
\usepackage[utf8]{inputenc}
\usepackage[T1]{fontenc}
\usepackage[english]{babel}
\usepackage{graphicx}
\usepackage{xcolor}
\usepackage{float}
\usepackage{flafter}
\usepackage{hyperref}

\usepackage{subfiles}
\usepackage{subcaption}
\usepackage{wrapfig}
\usepackage{comment}
\usepackage{amsmath}
\usepackage{amssymb}
\usepackage[authoryear]{natbib}
\usepackage{changepage}
\usepackage{array}
\usepackage{booktabs}
\usepackage{tikz}
\usetikzlibrary{shapes.geometric, arrows.meta, positioning, fit, backgrounds}
\theoremstyle{plain}
\newtheorem{theorem}{Theorem}
\newtheorem{proposition}{Proposition}
\theoremstyle{definition}

\hypersetup{colorlinks=true, linkcolor=blue!55!black, citecolor=blue!55!black, urlcolor=blue!55!black}

\setcitestyle{round,semicolon}

\title{Concave Processes for Multi-Task Career Trajectories: Evaluating Aging Across Multiple Measures of NBA Performance}
\author[1]{Abhijit Brahme\thanks{\texttt{abhijitbrahme@ucsb.edu}}}
\author[1]{Alexander Franks\thanks{\texttt{amfranks@ucsb.edu}}}
\affil[1]{Department of Statistics and Applied Probability, University of California, Santa Barbara, Santa Barbara, CA 93106}
\date{}

\begin{document}
\maketitle

\begin{abstract}
NBA athlete performance tends to increase through early career as athletes develop and acclimate to the league, followed by decline due to age-related deterioration in athleticism. While this general pattern persists, the precise shape of this trajectory varies by athlete and across different measures of performance. To model performance increase and decline, we introduce the concave process prior,
a novel nonparametric prior over concave functions. We then use a latent variable model to characterize dependence in aging profiles across player-metrics, embedding each player in a
shared low-dimensional latent space so that players with similar profiles learn
similar trajectory shapes, peak ages, and peak values. Posterior analysis of the learned embedding supports latent-space nearest-neighbor
retrieval of career-comparable players and informed projections of young players. We apply our model to data across over a dozen performance
metrics for over two thousand players in seasons ranging from 1997 to 2026. Our results show that jointly modeling all metrics improves held-out predictive performance over single-metric alternatives, and that the concavity constraint itself improves prediction.  We find that athleticism-driven metrics such as blocks and offensive rebounds peak in a player's early twenties, while skill-based shooting metrics peak in the mid-twenties or later.
\par\medskip\noindent\textbf{Keywords:} Bayesian nonparametrics; Gaussian process latent variable model; concavity constraints; production curves; sports analytics; survival analysis.
\end{abstract}

\section{Introduction}

Performance across all demanding human endeavors follow the same performance pattern.  Early on, performance improves as skills are acquired: extended, structured practice produces large and reliably measured gains in domains ranging from music to chess to sport \citep{ericsson1993deliberate, macnamara2014deliberate}.  Later, the physiological and cognitive capacities that support execution begin to erode.  Physically, muscle mass, strength, speed, and aerobic power decline after forty years of age onward and accelerates with age \citep{tanaka2008endurance}.  On the cognitive side, measurable decline in processing speed, reasoning, and memory begins well before mid-life \citep{salthouse2009when}.  Together, the two processes yield a single-peaked aging profile documented for creative and professional achievement \citep{simonton1988age}.  

In athletic competition, such aging patterns are often particularly important since performance is measured over the course of athletes complete careers. Production curves have been estimated across multiple sports ranging from baseball to hockey to golf \citep{berry1999bridging, schulz1994relationship, bradbury2009peak}.
In these studies, production is approximately unimodal in age.  Within a sport, the age of peak is fairly stable, although the timing of this peak, and the rate of performance decline, vary by sports and are influenced by the  physical and cognitive demands unique to the task \citep{schulz1988peak, allen2015age, berry1999bridging, fair2007estimated}.   This evidence sorts abilities into two broad families that age in opposite ways: skill-based abilities, accumulated through sustained, structured practice, that remain comparatively robust once acquired, and athleticism-based abilities that atrophy earlier and fall away more steeply.

 Basketball is a sport where many observable skills fall into a blend of both regimes.  A single roster spot demands shooting accuracy, rebounding, playmaking, and defense at once, and these draw on a combination of learned skill and raw athleticism; analyses of professional basketball players find evidence of exactly this kind of trait compensation with age \citep{lailvaux2014trait}.  General managers must project whether a player’s production will hold over the life of a proposed contract.  For example, the Brooklyn Nets’ 2013 trade for aging stars Paul Pierce and Kevin Garnett (35 and 37 years old respectively) in exchange for future draft picks and younger prospects illustrates the franchise-altering cost of misjudging career trajectories.  Paul Pierce retired shortly after the end of that season, Kevin Garnett was traded in middle of the subsequent season and without draft capital or young talent the Nets quickly became one of the worst teams in the league. From the athlete’s perspective, knowing which skills decline earliest can guide training adjustments: players may shift toward shooting and decision-making, relying on developing skills rather than athleticism to impact the game.  \textit{Production curve analysis} formalizes this challenge as inferring smooth trajectories $f_p(t)$ from sparse longitudinal observations across players \citep{wakim2014functional}.

The bulk of the production curve literature models focus on modeling the evolution of a scalar summary of NBA performance \citep{page2007using, page2013effect, vaci2019large, wakim2014functional}, and ignore more nuanced multivariate nature of player performance across multiple aspects of play like shooting, rebounding, and defensive ability.  Rating systems such as RAPTOR \citep[Robust Algorithm using Player Tracking and On/Off Ratings;][]{natesilver538_2015, natesilver538_2019} estimate a single summary score via nearest-neighbor averaging of similar historical players. They impose similarity through fixed feature comparisons and are susceptible to selection bias at career extremes.  \citet{vaci2019large} improve upon naive averages by fitting piecewise parametric curves with the peak location specified a priori, but the rigid functional form limits adaptability to diverse career shapes.  Methods that do incorporate player pooling either use categorical positional types \citep{page2007using} or pre-specified continuous similarity \citep{wakim2014functional, vinue2019forecasting}; in particular, \citet{vinue2019forecasting} represent each player as a convex combination of a fixed set of archetypoids \citep{vinue2015archetypoids}, constraining the representation to the convex hull of an a-priori determined archetype set.

\subsection{Contributions}
\label{sec:contributions}

Our analysis is organized around three questions about how professional athletes age.

\begin{enumerate}
\item \textbf{How does production evolve over a career, and how does that evolution vary across athletes?}  A career is characterized not only by average performance but also by \emph{when} players peak and how steeply they decline away afterwards.  We therefore want the age of peak performance and the value attained at that peak to be explicit, estimable quantities carrying their own  uncertainty, so that athletes may be compared on the shape of their careers rather than on a single career summary.

\item \textbf{Do different aspects of performance follow different aging trajectories?}  Shooting accuracy, rebounding, playmaking, and defense place different demands on learned skill and on physical capacity. These qualities do not necessarily peak together within a player or across the league.  Answering this requires estimating a distinct trajectory for each metric while still pooling strength across metrics and players.

\item \textbf{Does multi-task inference improve forecasts of player skill?}  Decisions about contracts, playing time, and draft selections depend on accurate forecasts across a range of skills.  Multi-task models have been shown to have improved predictive power compared to one-at-a-time predictive models \citep{NIPS2007_66368270}, but to date such approaches have not been applied to forecasting basketball performance.

\end{enumerate}

Answering these questions requires a novel production curve model which must (i) treat the metrics jointly rather than independently (ii) place a prior over trajectories flexible enough to accommodate diverse career shapes yet constrained to a single peak, with the location of the peak inferred rather than fixed a priori \citep[cf.][]{vaci2019large}; (iii) respect the native sampling distributions of count and proportion metrics (iv) learn player similarity from the data rather than imposing it a priori through positional labels \citep{page2007using} or a pre-specified archetype sets \citep{vinue2019forecasting}; and (v) account for survivorship bias which arises due to the earlier retirement of less productive athletes.

To this end, we propose a Bayesian multi-task Concave Process Latent Variable Model (CPLVM) that jointly estimates concave production curves across over a dozen performance metrics and thousands of NBA players across 30 seasons of data.  The model accounts for player and metric similarity by learning player-specific latent embeddings from the full metric profile; players with similar embeddings have similar aging trajectories without relying on positional labels or hand-specified archetypes.  Concavity is enforced through a novel nonparametric prior which we call a ``concave process prior''.  A survival model accounts for right-censored careers, correcting the survivor bias that afflicts conditioning on observed seasons alone.

We use the results of our latent variable representation to characterize how players age across the suite of metrics. In doing so, we are able to generate posterior distributions of career trajectories across a wide set of metrics for prospective athletes with limited data. Furthermore, we generate a deeper understanding of when different metrics peak, characterizing which components of performance crest early and which crest late in a career. Finally, we are able to examine how a player moves through their aging profile, creating a characterization of athletes who take longer to develop vs. those whose decline is rapid.

The remainder of the paper is organized as follows.  In Section~\ref{sec:data} we describe the data and highlight the structure which motivates our modeling choices.  In Section~\ref{sec:proposed_model} we develop the proposed multi-task Bayesian model of performance, and in Section~\ref{sec:inference} we introduce the details of our inferential approach.  In Section~\ref{sec:results} we report the results of our model fit, including comparisons with alternative models and results which characterize variability in production curves across players and metrics.

\section{Data and Exploratory Analysis}

\label{sec:data}
In this work, we analyze data collected on approximately 2300 NBA players from years 1997 - 2026, from the ages of 18 - 38
scraped from \url{basketball-reference.com} at a yearly season-level resolution. The data can be naturally viewed as a  tensor of size $P$ by $T$ by $K$ where $P$ is the number of players, $T$ is the range of ages observed across players' careers, and $K$ is the number of production metric curves.   Values may be missing for a player at certain ages due to injury, retirement (right censoring), or late entry into the league (left-censoring). On offense, we measure free throw attempts (FTA), free throw percentage (FT\%), two point shot attempts (FG2A) and two point shot percentage (FG2\%), three point shot attempts (FG3A) and percentage (FG3\%), offensive rebounds (OREB), turnovers (TOV), and assists (AST).  On defense, we measure blocked shots (BLK), defensive rebounds (DREB), and steals (STL). 
We also include composite metrics like usage rate (USG) and box plus-minus (BPM), the latter of which is an estimate of points contributed by the player above league average per 100 possessions played \citep{myers2020bpm2}.  Box Plus Minus is divided into a measure of points added on offense (OBPM) and points added on defense (DBPM). In addition, we measure the average minutes per game played (MPG) and total fraction of games played in the season (GP\%). More detailed description of these metrics and an introduction to basketball analytics can be found in \cite{terner2020modeling}.

Broadly, there are three sources of variation, which align with each of the modes of the data tensor: variability over time for a given player-metric, covariability between players, and covariability between metrics.  In Figure \ref{fig:obpm} we plot LOESS smoothed curves of the evolution of OBPM over time for five different players.  We can see that for all five players they exhibit the expected pattern of improvement and decline.  However, the value at peak performance, the age at which they peak, and the rate at which they improve and decline all vary by player.  LeBron James, widely regarded as one of the best players of all time, exhibits the largest peak value OBPM of these five athletes and the slowest decline post-peak. Steve Nash, a famous "late bloomer", does not peak until his early 30s and sustains that ability until nearly 35 before steeply declining. We seek to model this temporal evolution of performance via a model that flexibly accounts for this variation across players. 

However, the rise-then-fall pattern is not limited to OBPM.  In Figure \ref{fig:with_cor} we show the correlation matrix of metrics over time, averaged over players with at least three seasons of data. The figure shows modest positive correlation between most metrics, which is driven by a combination of what the player can do \emph{athletically} and what \emph{skills} they have learned with experience (often referred to as "basketball IQ") or repetition.  For example, free throw attempt rate is correlated with two-point field goal attempt rate, since players who attack the rim are more likely to get fouled; the ability to get to the rim is largely driven by player athleticism.  Similarly, metrics measuring shooting skill tend to correlate, and players who rebound well on the offensive glass in a given season tend to rebound well on the defensive end. With the metrics ordered via hierarchical clustering, the heatmap shows a group of metrics related to shooting-skill and offensive performance, and a distinct group corresponding to rebounding and blocked shot metrics.  Composite metrics such as DBPM, which aggregate contributions of both kinds, correlate modestly with each group. A multi-task model of production should necessarily account for this metric-dependence.

Finally, we note that performance varies by player, depending on their playing style and position.  In Figure~S1 of the Supplementary Material we show the correlation in career average metrics computed between players. This plot largely reflects differences in playing style, best understood in terms by considering the three broad categories of basketball player: guard, forward, and center. Guards tend to be better at shooting and have more three point shots and centers tend to accrue more blocks and rebounds. This is apparent in the negative correlation between the group of blocks/rebounding metrics and the shooting metrics, since being a specialist in one set of skills tends to imply that they are worse at the other.  Positional categories should be considered useful heuristics, as playing style and ability exists on a spectrum \citep{chessa2023complex}. A production curve model should leverage the low dimensional structure inherent to player style and ability.

\begin{figure}[tbp]
    \centering
    \begin{subfigure}[b]{0.48\textwidth}
        \centering
        \includegraphics[width=\textwidth]{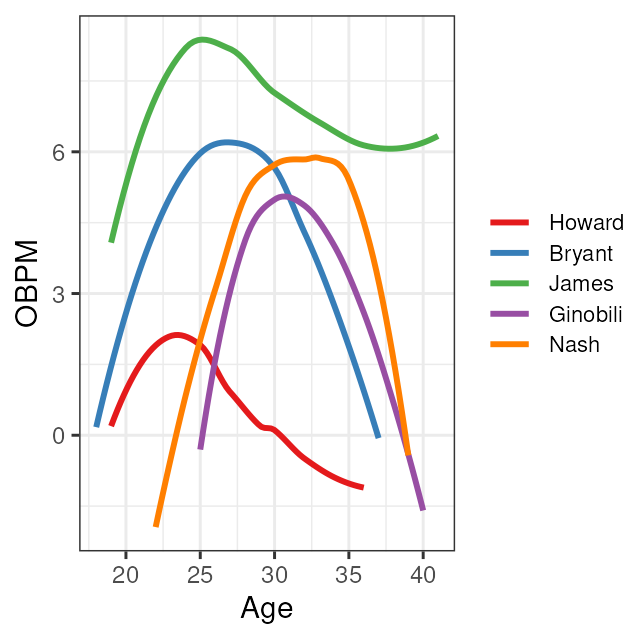}
        \caption{LOESS smoothing of OBPM over time}
        \label{fig:obpm}
    \end{subfigure}\hfill
    \begin{subfigure}[b]{0.48\textwidth}
        \centering
        \includegraphics[width=\textwidth]{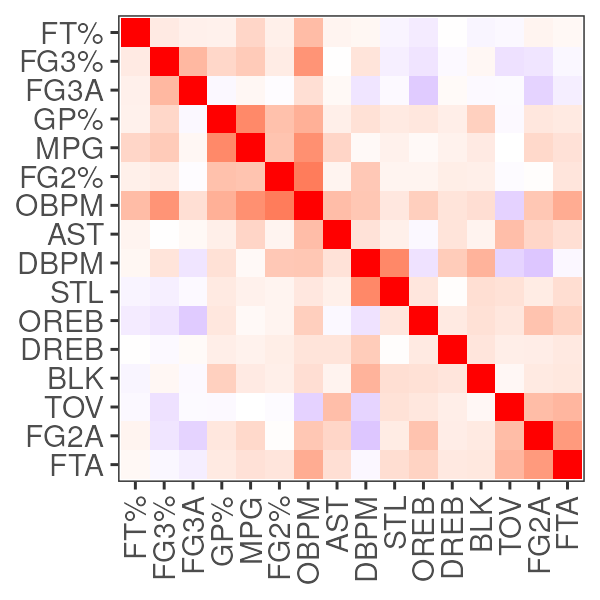}
        \caption{Within-player metric correlation}
        \label{fig:with_cor}
    \end{subfigure}

    \caption{(a) LOESS-smoothed OBPM trajectories for five players: every career rises and then declines, but peak age, peak value, and the rate of decline all differ by player. (b) Within-player correlation of metrics across seasons of the same career: most metrics are positively correlated over the course of a career. Rows and columns are ordered by hierarchical clustering.}
    \label{fig:exploratory}
\end{figure}

\section{Proposed Model}

\label{sec:proposed_model}

Let $\mathcal{Y}$ be the $P \times T \times K$ tensor of metrics $y_{ptk}$ be the measured outcome for player $p$ at age $t$, in metric $k$.  We model data at a yearly resolution between the ages of 18 and 38 ($t \in \{18, ..., 38\})$ over the 17 metrics described in Section \ref{sec:data} ($k \in \{1, ..., 17\})$. 
$y_{ptk}$ is missing for all $k$ if player $p$ does not play age $t$, either due to injury or retirement.

To account for the evolution of performance over time, we construct an aging function $f_k(X_p,t)$ that is concave in $t$ for all $p$ and $k$. Although this assumption than unimodality, we find that concavity provides a reasonable approximation to unimodality on a fixed domain. This aging function enters additively into a latent natural parameter $\mu_{ptk}(X_p)$, which represents the expected capability of player $p$ at age $t$ in metric $k$ given a player-specific latent embedding $X_p$. This parameter functions as a link-transformed conditional expectation across a number of distinct likelihood families, allowing us to model non-Gaussian data. We extend this GLM-like construction to a career duration survival model, which also models the natural parameters as a function of $X_p$.  In short, our model can be succinctly characterized via the following equations, 
\begin{align}
\mu_{ptk} &= f_k(X_p,t) + r^k_p(t) + o_{ptk} \label{eqn:mu_ptk} \\ 
y_{p t k} &\sim \mathcal{F}_k\left(g_k^{-1}\left(\mu_{p t k}\right), \phi_k, \omega_{p t k}\right) \label{eqn:link_fn}
\end{align}
where $f_k(X_p, .)$ represents a concave aging profile given player embedding $X_p$ for metric $k$ and captures aspects of performance that are predictable given the performance of similar players over time. $r_p^k(t)$ reflects aspects of the player's true skill that deviate from concavity in ways that are not predictable (e.g.\ due to injury, change of strategy or role).  This source of variability is given an AR(1) prior and is distinct from irreducible sampling variability due to finite observations which comes from the sampling distribution in \eqref{eqn:link_fn}. $o_{ptk}$ is a fixed precomputed offset that we use to adjust for seasonal league-average trends (see Appendix~B.1 of the Supplementary Material for details). $\mathcal{F}_k$ denotes the conditional distribution family corresponding to metric $k$ and  $g_k(\cdot)$ is a metric-specific link function mapping the conditional expectation to the linear predictor space.  $\phi_k$ represents optional metric-level dispersion or variance scale parameters and $\omega_{ptk}$ represents time-varying, observation-specific exposure weights or trial totals, such as minutes played or total field goal attempts.

In Subsection \ref{sec:likelihoods}, we describe the likelihood functions for each metric. In Subsection \ref{sec:player_similarity} we describe how we model similarity in structural variations from a shared low-dimensional latent space across players, which jointly maps each likelihood to a unified representation of career progression. Finally, 
in Section \ref{sec:concave_prior} we describe how we construct the concave process for modeling changes in production over time.

\subsection{Likelihood}
\label{sec:likelihoods}
In this section, we describe the data models, which include a career trajectory model across multiple metrics as well as a career duration survival model. Joint inference over the performance and survival components proceeds by summing the log-densities from both models, so that the latent embedding $X_p$ is simultaneously informed by what a player did on the court and how long they stayed in the league.

We consider four classes of observation models based on the available metrics: Gaussian, Binomial / Beta-Binomial, Poisson /  Negative-Binomial, and Beta (See Table \ref{tab:glm_framework}). Metrics which are inherently fractions (e.g.\ field goal percentage) are assumed to be conditionally Beta-Binomially or Binomially distributed, where $n_{ptk}$ is the number of trials, $\mu_{ptk}$ is the logit success probability, and $\mathcal{B}$ is the set of indices corresponding to made-shot metrics.  Raw volume counts (BLK, STL, AST, DREB, OREB, TOV) are modeled as conditionally Poisson,
where $\mathcal{C}$ is the set of Poisson indices.
\begin{table}[tbp]
\centering
\small
\setlength{\tabcolsep}{5pt}
\caption{Observation families by metric type: likelihood, link function, dispersion parameter, and exposure weight entering Equation \eqref{eqn:link_fn}. For notational brevity in the Binomial and Poisson case, we assume that the dispersion term from their overdispersed counterparts $\rightarrow \infty$.}
\label{tab:glm_framework}
\vspace{2mm}
\begin{tabular}{lllll}
\toprule
\textbf{Metric Type} ($k$) & \textbf{Likelihood} ($\mathcal{F}_k$) & \textbf{Link} ($g_k$) & \textbf{Dispersion} ($\phi_k$) & \textbf{Exposure} ($\omega_{ptk}$) \\
\midrule
\textbf{Made Shots} ($k \in \mathcal{B}$) & (Beta-)Binomial & Logit & Precision $\tau_k^2$  & Total attempts $n_{ptk}$ \\
\addlinespace
\textbf{Attempts} ($k \in \mathcal{C}, \mathcal{R}$) & Pois. / NB & Log & Concentration $\tau_k^2$ & Minutes played $m_{pt}$ \\
\addlinespace
\addlinespace
\textbf{Composite} ($k \in \mathcal{G}$) & Normal & Identity & Variance $\sigma^2$ & Precision weight $m_{pt}$ \\
\addlinespace
\textbf{Proportions} ($k \in \mathcal{P}$) & Beta & Logit & Precision $\tau_k^2$ & Mean-variance map \\
\bottomrule
\end{tabular}
\end{table}

\begin{equation}
y_{ptk} \sim \text{Binomial}(n_{ptk}, \operatorname{logit}^{-1}(\mu_{ptk}))  \text{ if } \text{k} \in \mathcal{B}
\end{equation}

\begin{equation}
y_{ptk} \sim \text{Poisson}(m_{pt} \cdot \exp(\mu_{ptk}))  \text{ if } \text{k} \in \mathcal{C}
\end{equation}

 Attempt totals (FTA, FG2A, FG3A), which exhibit overdispersion beyond Poisson, are modeled as Negative-Binomial,
\begin{equation}
y_{ptk} \sim \text{Negative-Binomial}(m_{pt} \cdot \exp(\mu_{ptk}), \tau^2_k)  \text{ if } \text{k} \in \mathcal{R}
\end{equation}
where $\tau^2_k$ is the overdispersion parameter and $\mathcal{R}$ is the set of Negative-Binomial attempt metrics. For continuous valued metrics (composite metrics such as OBPM and DBPM), we assume a conditionally Gaussian model:
\begin{equation}
y_{ptk} \sim \text{Normal}(\mu_{ptk}, \sigma^2 / m_{pt})  \text{ if } \text{k} \in \mathcal{G}
\end{equation}
where $\mathcal{G}$ corresponds to the indices of real-valued metrics. Proportion metrics bounded in $(0,1)$ — specifically usage rate (USG) and percentage of available minutes played (PCT Minutes) — are modeled as conditionally Beta distributed,
\begin{equation}
y_{ptk} \sim \text{Beta}(\operatorname{logit}^{-1}(\mu_{ptk}),\, \tau^2_k)  \text{ if } \text{k} \in \mathcal{P}
\end{equation}
where $\mathcal{P}$ is the set of indices corresponding to proportion metrics.

In addition, we consider some special cases related to playing-time.  Let $g_{pt}$ denote the total number of games played and $m_{pt}$ denote the minutes per game played by player $p$ at time $t$.  We assume that
\begin{align}
g_{pt} &\sim \text{Beta-Bin}(82, \operatorname{logit}^{-1}(\mu^g_{pt}), \tau_1^2)\\
m_{pt} &\sim 48\cdot \text{Beta}(\operatorname{logit}^{-1}(\mu^m_{pt}), \tau_2^2 /g_{pt})
\end{align}
That is, games played follows a Beta-Binomial distribution with 82 trials (the maximum games in a season) and minutes-per-game follows a scaled Beta distribution (scaled by 48, the maximum minutes in a game).  In both of the above, we use the mean-variance parameterization.

Finally, not every player is observed for the full span of seasons covered by the data, and their dropout is likely informative.  For instance, some players retire voluntarily, whereas others are forced out due to diminishing performance or injury; career lengths are heavily right-skewed, with the modal career lasting a single season (Figure~S2 of the Supplementary Material).  Ignoring these selection would bias estimates of the aging curve since players who exit early are systematically different from those who persist.  We therefore augment the trajectory likelihood with an explicit model for career duration that accounts for right-censored observations.

We model the  \emph{exit age} (e.g. retirement) of player $p$ using a survival model.  Let $T_p$ be the exit age and let $\alpha_p$ denote their \emph{entrance age}.  We place a player-specific \emph{Gompertz} proportional-hazards model on the time at risk $T_p \mid T_p > \alpha_p$, with age-varying hazard
\begin{equation}
    h_p(a \mid X_p) \;=\; \eta_p(X_p)\,\exp\!\bigl(\gamma_p(X_p)\,a\bigr),
    \label{eq:gompertz}
\end{equation}
where $\eta_p(X_p) > 0$ is the \emph{baseline hazard} and $\gamma_p(X_p) > 0$ is the \emph{aging rate} governing how quickly exit risk accelerates with age.  
We adopt the Gompertz form because its log-hazard is linear in age, $\log h_p(a) = \log\eta_p(X_p) + \gamma_p(X_p)\,a$, encoding the notion that exit risk increases by a constant proportional factor with each additional year of age.  This exponential-in-age hazard is commonly used to model human aging, making it a natural model of athletic decline. More details of this formulation can be found in Appendix~B.4 of the Supplementary Material.

\subsection{Player Specific Latent Variables}
\label{sec:player_similarity}

A key goal is to accurately model dependence between both players and metrics in a data-informed manner.  To this end, we borrow heavily from the methodology of Gaussian Process Latent Variable Models (GPLVMs), which treat the usual input to the Gaussian Process as an unobserved latent variable \citep{JMLR:v6:lawrence05a}. Specifically, we introduce a $q$-dimensional player-specific latent variable $X_p \in \mathbb{R}^q$, such that players that with similar values of $x_p$ are expected to have more similar career production curves across metrics. We encode prior domain knowledge about player embeddings using a \emph{structured} prior mean, by shrinking latent variables towards a linear combination of exogenous player attributes:
\begin{equation}
\label{eq:structured_prior}
  X_p = Z_p^\top W + \sigma_X\, \varepsilon_p, \quad
  \varepsilon_p \sim t_4\!\bigl(0,\, \sqrt{1/2}\; I_q\bigr), \quad
  W \sim \mathcal{N}(0,\, \sigma_W^2\, I)
\end{equation}
Covariates, $Z$, include height, draft position, and nominal position (Center, Guard, Forward). The residual is heavy-tailed, scaled to unit marginal variance so that it matches a standard normal in width and differs only in the tails, which lets an outlying player sit away from the population without widening the bulk of the embedding. A more detailed description of this prior can be found in Appendix~B.2 of the Supplementary Material. This prior helps identify more appropriate latent representations for players with limited data, which in turn improves predictive performance.  Next, we describe how these variables factor into the expected production for players across all metrics.

\subsection{A Non-Parametric Prior for Concave Functions}
\label{sec:concave_prior}
 In this Section we describe our construction for the concave functions $f_k(X_p, t)$.  Specifically, we construct $f$ by developing a novel nonparametric prior over concave functions which strikes an ideal balance between flexibility over simpler parametric alternatives (e.g. quadratic models) while appropriately constraining the functions to respect expected well-documented aging patterns. We first describe this prior in general terms  before describing how it's applied specifically to model basketball data.

We call the prior the ``concave process prior'' which arises as a composition of continuous push-forward maps applied to a Gaussian Process with a stationary kernel. Our approach is inspired the strategy used by \cite{convexityconstraints}, who propose a non-parametric prior over the space of monotonic functions.
\begin{proposition}[Concave process prior]
We say a random function f(t) has a concave process (CP) prior iff
\begin{align}
    f(t) = \beta_0 + \beta_1 (t-t_0) - \int_{t_0}^t \int_{t_0}^s [g(z)]^2 \, dz \, ds,
    \label{eqn:concave_process_uni}
\end{align}
where $g \sim \mathcal{GP}(0, K_t)$ is a mean-zero Gaussian process with stationary covariance kernel $K_t(\cdot, \cdot; \Theta)$, and $\beta_0, \beta_1$ are random variables with prior support over all $\mathbb{R}$.
If $f(t) \sim CP$, it is almost surely concave.  Moreover, let $\mu$ denote the law of $g$ on $C([t_0, t_1])$, the space of continuous functions on the modeled interval, and let $T : C([t_0, t_1]) \to C([t_0, t_1])$ denote the map defined by the right-hand side of Equation~\eqref{eqn:concave_process_uni},
\begin{align*}
T(g) &= \beta_0 + \beta_1 (t-t_0) - \int_{t_0}^t \int_{t_0}^s g(z)^2\, dz\, ds.
\end{align*}
Then $f(t)$ puts positive mass on every uniform neighborhood of any continuous concave function.
\end{proposition}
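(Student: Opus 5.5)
The plan is to treat the two claims separately. Concavity comes from an almost-sure differentiability argument. The support claim comes from three ingredients: a continuity property of the map $T$, a smoothing approximation of the target concave function, and the known description of the topological support of a Gaussian measure.

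\emph{Concavity.} Since $K_t$ is a stationary kernel for which $g$ has continuous sample paths on $[t_0,t_1]$, $g^2$ is almost surely continuous. Hence $f=T(g)$ is almost surely $C^2$, with
\[
f''(t)=-g(t)^2\le 0 .
\]
A $C^2$ function with nonpositive second derivative is concave, so $f$ is almost surely concave.

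\emph{Continuity of $T$.} Write $L=t_1-t_0$. For $g,\tilde g\in C([t_0,t_1])$ and coefficient pairs $(\beta_0,\beta_1)$ and $(b_0,b_1)$,
\[
\|T(g)-T(\tilde g)\|_\infty \le |\beta_0-b_0| + L|\beta_1-b_1| + \tfrac{L^2}{2}\,\|g-\tilde g\|_\infty\bigl(\|g\|_\infty+\|\tilde g\|_\infty\bigr).
\]
This uses $|g^2-\tilde g^2|=|g-\tilde g|\,|g+\tilde g|$. So $T$ is jointly continuous in $(\beta_0,\beta_1,g)$ for the sup norm.

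\emph{Approximating the target.} Fix a continuous concave $h$ on $[t_0,t_1]$ and $\varepsilon>0$. We first find $\tilde h\in C^2$ with $\tilde h''\le -\delta<0$ and $\|h-\tilde h\|_\infty<\varepsilon/2$, in three steps.
\begin{enumerate}
\item Replace $h$ by its piecewise-linear interpolant on a fine grid. This is concave and uniformly close to $h$ by uniform continuity.
\item Round each corner by a local quadratic or mollified arc of small width. This keeps concavity and gives a $C^2$ function that is still close to $h$.
\item Subtract $\delta(t-t_0)^2$ for small $\delta$. This makes the second derivative strictly negative at the cost of at most $\delta L^2$ in sup norm.
\end{enumerate}
Now set $g^\ast=\sqrt{-\tilde h''}$, which is continuous and strictly positive, and take $b_0=\tilde h(t_0)$ and $b_1=\tilde h'(t_0)$. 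Integrating twice gives $T(g^\ast)=\tilde h$ with these coefficients. By the continuity bound, there is $\eta>0$ such that the event
\[
E=\{|\beta_0-b_0|<\eta,\ |\beta_1-b_1|<\eta,\ \|g-g^\ast\|_\infty<\eta\}
\]
implies $\|f-h\|_\infty<\varepsilon$.

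\emph{Positive probability of $E$.} Assume, as is implicit in the model, that $(\beta_0,\beta_1)$ is independent of $g$ and has full support on $\mathbb R^2$ (for instance, a Gaussian prior). Then it suffices that $\mu$ assigns positive mass to every sup-norm ball around $g^\ast$.

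This is the main obstacle, since it is not true for an arbitrary stationary kernel. I would invoke the classical result that the topological support of a centered Gaussian measure on $C([t_0,t_1])$ is the sup-norm closure of its reproducing kernel Hilbert space (Kallianpur; see also van der Vaart and van Zanten). I would then add the assumption that the spectral measure of $K_t$ has full support on $\mathbb R$, which holds for squared-exponential and Matérn kernels. Under that assumption the RKHS is dense in $C([t_0,t_1])$: by a Stone--Weierstrass / Fourier argument, a continuous function orthogonal to all $K_t(\cdot,s)$ would have to vanish. It follows that $\mu(\|g-g^\ast\|_\infty<\eta)>0$.

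Independence then gives $P(E)>0$, and hence $P(\|f-h\|_\infty<\varepsilon)>0$. If the paper's stationary kernel family is broader, I would state this spectral condition explicitly as the hypothesis under which the proposition holds.
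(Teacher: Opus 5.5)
Your proof is correct in substance, and it proves more than the paper's proof does. The paper's argument (Appendix~A.1) consists only of your first two ingredients. First, the map $(\beta_0,\beta_1,g)\mapsto T(g)$ is continuous, so the push-forward is a valid law. Second, every image satisfies $T(g)''=-g^2\le 0$. The paper never argues the positive-mass claim, and continuity alone cannot give it, because a continuous image of a Gaussian law need not charge every open set of concave functions.

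Your second half supplies exactly the missing step: density of $\{T(g)\}$ (with free $b_0,b_1$) among continuous concave functions, combined with the fact that the support of $\mu$ is the sup-norm closure of its RKHS. Your explicit local Lipschitz bound is also a more accurate version of the paper's remark about ``bounded'' operations, since squaring is continuous but not a bounded linear map.

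Your caveats are genuinely needed. For a constant kernel, $f$ is almost surely a quadratic. Quadratics form a closed finite-dimensional set, so a tent $-|t-m|$ with $t_0<m<t_1$ has a uniform neighborhood of prior mass zero. Likewise, marginal full support of $\beta_0$ and $\beta_1$ does not give $P(E)>0$ without something like your independence and joint-support assumption.

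Two technical repairs are needed.

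\emph{Rounding the corners.} A quadratic arc tangent to both lines makes $\tilde h''$ jump at the tangency points, so $g^\ast=\sqrt{-\tilde h''}$ would be discontinuous and not in $C([t_0,t_1])$. Use only the mollified option: smooth the nonincreasing step function $\ell'$ near each breakpoint and integrate. The strict bound $\tilde h''\le-\delta$ is unnecessary, since $\sqrt{\cdot}$ is continuous on $[0,\infty)$.

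\emph{Density of the RKHS.} Sup-norm density is governed by the dual of $C([t_0,t_1])$. The Hahn--Banach criterion therefore concerns finite signed measures $\nu$ on $[t_0,t_1]$ that annihilate every $K_t(\cdot,s)$, not continuous functions that are $L^2$-orthogonal to them. With that criterion you can weaken your spectral hypothesis to the one the paper itself states, namely existence of a spectral density $S$ (with $K_t\not\equiv 0$). The argument runs as follows.
\begin{enumerate}
\item Integrating $\int K_t(t,s)\,\nu(dt)=0$ against $\nu(ds)$ gives $\int|\hat\nu(\omega)|^2S(\omega)\,d\omega=0$.
\item Hence $\hat\nu$ vanishes on $\{S>0\}$, a set of positive Lebesgue measure, which therefore has an accumulation point.
\item Because $\nu$ is compactly supported, $\hat\nu$ is entire, so $\hat\nu\equiv 0$ and $\nu=0$.
\end{enumerate}
Full support of the spectral measure is more than you need; a finite accumulation point of its support suffices.
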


\begin{proof}
See Appendix~A.1 of the Supplementary Material
\end{proof}

The construction above requires only that $K_t$ be stationary with a well-defined spectral density; throughout the remainder of the paper we take $K_t$ to be the squared-exponential kernel.  In practice, the double integral in Equation \eqref{eqn:concave_process_uni} cannot be evaluated analytically, so we make a linear approximation to the Gaussian process as in \citet{convexityconstraints}.  Specifically, we use the Hilbert Space GP basis function approximation (HSGP), which converges as the number of basis functions goes to infinity, and leads to tractable analytic solutions \citep{Solin_2019}. See Appendix~C of the Supplementary Material for details.

\begin{proposition}\label{prop:concave_uni_gp}

In Equation \eqref{eqn:concave_process_uni} replace $g(z)$ with the HSGP approximation $\sum_{l=1}^J S(\sqrt{\lambda_l})^{1/2} \alpha_l \psi_l(t)$, where $S(\cdot)$ is the spectral density of the kernel $K_t$, $\alpha_l \overset{iid}{\sim} \mathcal{N}(0,1)$ are basis weights, $\psi_l(t)$ and $\lambda_l$ are the $J$ eigenfunctions and eigenvalues of the HSGP approximation of $K_t$ on the compact interval $[-L, L]$, If we let $t_0 = -L$, then, 
\begin{align}
\label{eq:final_uni_gp}
    f(t | \Theta) = \beta_0 + \beta_1(t+L) - \mathbf{\alpha}^T \Psi(t) \mathbf{\alpha}
\end{align} 
where 
\begin{align}
\label{eq:phi}
\Psi_{lz}(t) = 
\begin{cases} \frac{cos(\epsilon_{lz}^{+}(t+L)) - 1}{2L(\epsilon_{lz}^{+})^2} +
\frac{(t+L)^2}{4L}  & \text{if } l = z \\
\frac{cos(\epsilon_{lz}^{+}(t+L)) - 1}{2L(\epsilon_{lz}^{+})^2} - \frac{cos(\epsilon_{lz}^{-}(t+L)) - 1}{2L(\epsilon_{lz}^{-})^2} & \text{if } l \neq z
\end{cases}
\end{align}
Here, $\mathbf{\alpha}$ is a $J$-vector with entries given by $\alpha_l \cdot S(\sqrt{\lambda_l})^{1/2}$, and $\epsilon_{lz}^{+} = \sqrt{\lambda_l} + \sqrt{\lambda_z}$, $\epsilon_{lz}^{-} = \sqrt{\lambda_l} - \sqrt{\lambda_z}$, respectively. $\Theta$ collects the hyperparameters of $K_t$ (lengthscale and amplitude), which enter through the spectral density $S$.  The parameters $\beta_0$ and $\beta_1$ are the constants of integration carried over from Equation~\eqref{eqn:concave_process_uni}.
\begin{proof}
    see Appendix~A.2 of the Supplementary Material.
\end{proof}
\end{proposition}

The the analytic approximation to concave process derived in Equation \eqref{eq:final_uni_gp} can thus be incorporated into a fully Bayesian model that can be implemented in probabilistic programming language.  We further reparameterize this model into a form that more amenable to prior elicitation for NBA production curve models.  Specifically, note that $\beta_0$ and $\beta_1$ can be modeled as a-priori independent parameters that control the shape of the concave function. For modeling NBA production, we are most interesting in reasoning about both the value and age of peak performance. As such, define $f^*$ and $t^*$ the max and argmax of $f(t)$ respectively, that is

\begin{align}
f^* = \underset{t}{\text{max }} f(t) = f(t^*) 
\end{align}
In the following Theorem, we state the parameterization of the concave process in terms of $f^*$ and $t^*$.
\begin{theorem}\label{theorem:concave_max_approx}
Let $f(t | \Theta)$
be defined as in Equation \eqref{eq:final_uni_gp}. Then, if $f(t | \Theta)$ takes on the maximum value of $f^*$ at time $t^*$,  

\begin{align}
\label{eq:max_constraint_gp}
    f(t | \Theta) = f^* + \mathbf{A}^T [ \Psi(t^*) - \Psi(t) + \frac{d\Psi(t^*)}{dt}(t - t^*) ] \mathbf{A}  
\end{align}
where $\frac{d\Psi(t^*)}{dt}$, the time derivative of \eqref{eq:phi}, can be computed analytically.
\end{theorem}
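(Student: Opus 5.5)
The plan is to solve for the integration constants $\beta_0$ and $\beta_1$ in Equation~\eqref{eq:final_uni_gp} from the two conditions that define the peak: the first-order condition $f'(t^*)=0$ and the value condition $f(t^*)=f^*$. Throughout, write $\mathbf{A}$ for the scaled weight vector (the $\mathbf{\alpha}$ of Proposition~\ref{prop:concave_uni_gp}, with entries $\alpha_l S(\sqrt{\lambda_l})^{1/2}$), so that $f(t\mid\Theta)=\beta_0+\beta_1(t+L)-\mathbf{A}^T\Psi(t)\mathbf{A}$. Here $\Psi(t)$ is a symmetric, smooth matrix function whose entries are built from cosines and a quadratic. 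Hence $t\mapsto \mathbf{A}^T\Psi(t)\mathbf{A}$ is differentiable, with derivative $\mathbf{A}^T\frac{d\Psi(t)}{dt}\mathbf{A}$. That derivative is obtained termwise from Equation~\eqref{eq:phi}: each cosine term gives $-\sin(\epsilon(t+L))/(2L\epsilon)$, and the diagonal quadratic gives $(t+L)/(2L)$. This is the analytic derivative claimed in the statement.

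Step one is to justify $f'(t^*)=0$. By construction $f''(t)=-(\sum_l \alpha_l S^{1/2}\psi_l(t))^2\le 0$, so $f$ is concave and differentiable. Assume the maximum is attained at an interior point $t^*$ of the modeled interval; this is the implicit assumption behind the reparameterization. Then stationarity holds: $\beta_1-\mathbf{A}^T\frac{d\Psi(t^*)}{dt}\mathbf{A}=0$, which gives $\beta_1=\mathbf{A}^T\frac{d\Psi(t^*)}{dt}\mathbf{A}$. If the maximum were at a boundary point, only an inequality would hold. I would flag this as the one genuine caveat. I would handle it by stating that the theorem describes the parameterization in which $t^*$ is the stationary point. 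Concavity then guarantees that this stationary point is a global maximizer, so the reparameterization is valid as a map from $(f^*,t^*,\mathbf{A})$ to concave functions with that peak.

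Step two is to impose $f(t^*)=f^*$, which gives $\beta_0=f^*-\beta_1(t^*+L)+\mathbf{A}^T\Psi(t^*)\mathbf{A}$. Substituting both constants back into $f(t)$, the $L$ terms combine as $\beta_1(t+L)-\beta_1(t^*+L)=\beta_1(t-t^*)$. This leaves
\[
f(t)=f^*+\mathbf{A}^T\Psi(t^*)\mathbf{A}-\mathbf{A}^T\Psi(t)\mathbf{A}+\mathbf{A}^T\tfrac{d\Psi(t^*)}{dt}\mathbf{A}\,(t-t^*),
\]
and collecting everything inside a single quadratic form in $\mathbf{A}$ yields Equation~\eqref{eq:max_constraint_gp}.

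The algebra is routine. The main obstacle is the interior-maximum or stationarity justification in step one, together with checking that differentiating $\Psi$ termwise is legitimate, including the diagonal and $l\neq z$ cases and the case $\epsilon^-_{lz}=0$. Termwise differentiation is immediate because the sum is finite, with $J$ terms. The $\epsilon^-_{lz}=0$ case does not arise because the HSGP eigenvalues $\lambda_l=(l\pi/2L)^2$ are distinct. As a sanity check, evaluating the right-hand side of Equation~\eqref{eq:max_constraint_gp} at $t=t^*$ returns $f^*$, and its derivative at $t^*$ vanishes, confirming the form.
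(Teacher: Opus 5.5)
Your proposal is correct and follows essentially the same route as the paper. The appendix (Controlling the Peak Behavior) likewise imposes $f(t^*)=f^*$ and $f'(t^*)=0$, solves for $\beta_1 = \mathbf{A}^T \frac{d\Psi(t^*)}{dt}\mathbf{A}$ and then for $\beta_0$, and substitutes back. It does this in the multivariate setting with $\mathbf{A}\phi(x)$ in place of $\mathbf{A}$, and it posits stationarity as a constraint rather than justifying it, so your remarks on interior maxima, concavity giving a global maximizer, and distinct eigenvalues are added rigor, not a different argument.
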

\begin{proof}
See Appendix~A.4 of the Supplementary Material.
\end{proof}

Because all of the relevant link functions are strictly monotone increasing, the location of the maximum is preserved under the inverse link: $\arg\max_t \, g^{-1}(f(t)) = \arg\max_t \, f(t)$.  Consequently, the peak age $t^*_k$ and the rank ordering of peak values are exactly preserved in the observable space, and the model's ability to place a structured prior on peak age and peak value is maintained regardless of the observation family.

\subsubsection{Modeling Dependence Across Concave Processes}

In this Section, we extend the univariate concave process to a process over multiple concave functions.  Here, we assume each process is associated with auxiliary time-invariant covariates, $x$, which in our application corresponds to the player-embedding.  We let $f(x,t)$  denote a multivariate concave process associated with covariate $x$ which is concave in $t$ and model the dependence between $h(x,t)$ and $h(x',t)$ for $x \neq x'$. We extend the concave process to a multivariate processes by using a multi-output Gaussian process.

 We can construct dependent multivariate concave functions, $f(x,t)$, by extending \eqref{eq:final_uni_gp} by integrating the square of a multi-task Gaussian Process, $g(x,t)$. Specifically, $g(x,t) \sim \mathcal{GP}(0, K_x \otimes K_t)$, where $K_x$ and $K_t$ are stationary covariance kernels acting on the $x$ space, and time $t$, independently, and by allowing the integration constants $\beta_0$ and $\beta_1$ to also depend on $x$.
We wish to find all set of $f(x,t)$ such that $[g(x,t)] ^2 = \frac{\partial ^2 f(x,t)}{\partial t^2} \geq 0$. As such, our desired function has the form
\begin{align}
\label{eq:concave_multi_rep}
    h(x,t) = \beta_0(x) + \beta_1(x)(t-t_0) - \int_{t_0}^t \int_{t_0}^s [g(x,z)]^2dzds 
\end{align}
where $\beta_0(x), \beta_1(x)$ are constants of integration with respect to $t$.

\begingroup\sloppy
\begin{proposition}\label{prop:concave_multi_gp}
Assume \eqref{eq:concave_multi_rep} holds and let $g(x,t) = \sum_{i=1}^M \phi_i(x) \bigl( \sum_{l=1}^K S(\sqrt{\lambda_l})^{1/2} \alpha_l^i \psi_l(t)\bigr)$, where $\phi_i(x)$ is a projection of $x$ such that $\phi(x)^T \phi(x) \approx \mathcal{K}(x,x')$ and $\psi_l(t)$ are HSGP basis functions for $t$. Then,
\begin{align}
\label{eq:final_gp}
    f(x,t| \Theta) &= f^{*}(x) \nonumber\\
    &+  \mathbf{\phi}(x)^T\mathbf{A}^T \!\left[ \Psi(t^{*}(x)) - \Psi(t) + \tfrac{d\Psi(t^*(x))}{dt} (t - t^{*}(x)) \right] \mathbf{A}\, \mathbf{\phi}(x)
\end{align}
where $\Psi_{lz}(t)$ is defined as in Equation \ref{eq:phi}, and $\mathbf{A}$ is a $M$ by $K$ matrix where $\mathbf{A}_{il} = S(\sqrt{\lambda_l})^{1/2} \alpha_l^i$. Our formulation $f(x, t | \Theta)$ converges uniformly to $h(x,t)$ of Equation \eqref{eq:concave_multi_rep}
as $K, M \xrightarrow{} \infty$.

\noindent Proof: see Appendix~A.3 of the Supplementary Material.
\end{proposition}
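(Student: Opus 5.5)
The plan has two parts. First I derive the closed form \eqref{eq:final_gp} by reducing to the univariate computations of Proposition~\ref{prop:concave_uni_gp} and Theorem~\ref{theorem:concave_max_approx}. Then I prove uniform convergence with a continuity argument for the map $g\mapsto h$.

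\emph{Closed form.} Fix $x$ and write $g(x,t)=\sum_{l=1}^K c_l(x)\psi_l(t)$, where $c_l(x)=\sum_{i=1}^M \phi_i(x)S(\sqrt{\lambda_l})^{1/2}\alpha_l^i=(\mathbf{A}\phi(x))_l$. For each fixed $x$, $g(x,\cdot)$ is then exactly a univariate HSGP expansion with coefficient vector $\mathbf{a}(x)=\mathbf{A}\phi(x)$ (up to the transpose convention for $\mathbf{A}$). So $[g(x,z)]^2=\mathbf{a}(x)^\top \psi(z)\psi(z)^\top \mathbf{a}(x)$. Integrating twice from $t_0=-L$ and using the elementary product-to-sum integrals of sinusoidal eigenfunctions gives $\int\!\!\int [g]^2=\mathbf{a}(x)^\top\Psi(t)\mathbf{a}(x)$, with $\Psi$ as in \eqref{eq:phi}. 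This is exactly the computation of Proposition~\ref{prop:concave_uni_gp}, which I will invoke rather than repeat. Hence $h_{K,M}(x,t)=\beta_0(x)+\beta_1(x)(t+L)-\mathbf{a}(x)^\top\Psi(t)\mathbf{a}(x)$.

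Next I reparameterize by the peak. For each $x$, concavity and interior attainment of the maximum at $t^*(x)$ give $\partial_t h(x,t^*(x))=0$, so $\beta_1(x)=\mathbf{a}^\top \tfrac{d\Psi(t^*)}{dt}\mathbf{a}$. The condition $h(x,t^*)=f^*(x)$ then determines $\beta_0(x)$. Substituting both yields \eqref{eq:final_gp}, exactly as in Theorem~\ref{theorem:concave_max_approx} applied pointwise in $x$. This part is essentially bookkeeping.

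\emph{Uniform convergence.} The key observation is that the map $T_x:g(x,\cdot)\mapsto \beta_0+\beta_1(t-t_0)-\int\!\!\int g^2$ is locally Lipschitz in sup norm on $C([t_0,t_1])$. For any $g_1,g_2$,
\[
\Bigl|\int_{t_0}^t\!\!\int_{t_0}^s (g_1^2-g_2^2)\,dz\,ds\Bigr|\le \tfrac{(t_1-t_0)^2}{2}\,\|g_1-g_2\|_\infty\,(\|g_1\|_\infty+\|g_2\|_\infty).
\]
It therefore suffices to show that the truncated process $g_{K,M}(x,t)$ converges to $g(x,t)$ uniformly over the compact domain $\mathcal{X}\times[t_0,t_1]$. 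The integration constants require care: with $\beta_0,\beta_1$ held fixed, the difference $h-h_{K,M}$ involves only the double integral. In the peak parameterization, $\beta_0,\beta_1$ are smooth functions of $\Psi$, $d\Psi/dt$ and $\mathbf{a}$, so they converge too, provided $d\Psi/dt$ converges. That holds because $\tfrac{d}{dt}\int\!\!\int g^2=\int g^2$ is Lipschitz in the same sense.

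\emph{Main obstacle.} The hardest step is uniform convergence of $g_{K,M}\to g$ in the sense appropriate to a random function. I would argue in two stages. The HSGP truncation in $t$ converges uniformly for the squared-exponential kernel on a compact subinterval of $(-L,L)$, as shown by \citet{Solin_2019}. For the $x$-factor, take $\phi$ to be a truncated Mercer or Karhunen--Lo\`eve expansion; Mercer's theorem gives uniform convergence of $\phi(x)^\top\phi(x')\to\mathcal{K}(x,x')$ on compacts. Because the kernel is separable, $K_x\otimes K_t$, the tensor-product truncation converges uniformly in covariance. To pass from covariance convergence to almost-sure (or in-probability) sup-norm convergence of the sample paths, I would use an Itô--Nisio-type argument: partial sums of independent Gaussian terms that converge in distribution in $C(\mathcal{X}\times[t_0,t_1])$ converge a.s. uniformly. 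Alternatively, if only approximation of the covariance structure is intended, I would state the result as convergence of the deterministic maps for fixed coefficient sequences. The statement "$f\to h$ uniformly" would then follow by composing with the locally Lipschitz map above. Boundedness of $\|g\|_\infty$ comes from the a.s. continuity of the Gaussian process on a compact set.
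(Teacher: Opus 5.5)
Your closed-form derivation matches the paper's (Appendix~A.3--A.4). For fixed $x$, the vector $\mathbf{A}^\top\phi(x)$ makes $g(x,\cdot)$ a univariate HSGP expansion, and the double integral gives the $\Psi$ quadratic form. The constraints $\partial_t f(x,t^*(x))=0$ and $f(x,t^*(x))=f^*(x)$ then fix $\beta_1(x)$ and $\beta_0(x)$.

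The gap is in the convergence part, which you route through sup-norm convergence $g_{K,M}\to g$ on $\mathcal{X}\times[t_0,t_1]$ with $t_0=-L$. That step fails.
\begin{itemize}
\item Every Dirichlet eigenfunction satisfies $\psi_l(-L)=L^{-1/2}\sin(0)=0$. So $g_{K,M}(x,-L)\equiv 0$ for all $K,M$, whereas $g(x,-L)$ is a nondegenerate Gaussian.
\item More generally, with $L$ fixed the series $\sum_l S(\sqrt{\lambda_l})^{1/2}\alpha_l\psi_l$ converges to the process with the boundary-modified kernel $\sum_l S(\sqrt{\lambda_l})\psi_l(t)\psi_l(t')$, not to one with kernel $K_t$. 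The result of \citet{Solin_2019} gives uniform convergence to $K_t$ on compact subsets only as $L\to\infty$ with the number of time basis functions growing faster than $L$ ($K/L\to\infty$ here). This is how the paper's Appendix~A.5 states it, and the proposition's ``$K,M\to\infty$'' has to be read that way.
\item Once $L$ grows, the basis functions themselves change. The approximants are then no longer partial sums of one fixed series of independent terms, so It\^o--Nisio has nothing to act on, since it only upgrades convergence in law of such partial sums.
\item The same holds for the random Fourier features the paper actually uses for $\phi$, which are redrawn as $m$ grows. Your switch to Mercer features therefore does not cover the implemented model.
\end{itemize}
Carried out correctly, your argument gives almost-sure uniform convergence to the concave process built from the boundary-pinned process on $[-L,L]$, not to $h$ in \eqref{eq:concave_multi_rep}.

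The paper's route avoids this, and you can adopt it with little change.
\begin{itemize}
\item First, move the anchor off the boundary. You can integrate over a fixed $[0,T]\subset(-L,L)$ as in Appendix~A.5. Alternatively, use the peak form directly: since $\Psi(t)-\Psi(t^*)-\tfrac{d\Psi(t^*)}{dt}(t-t^*)=\int_{t^*}^t\int_{t^*}^s\psi(z)\psi(z)^\top dz\,ds$, you get $f(x,t)=f^*(x)-\int_{t^*(x)}^t\int_{t^*(x)}^s g_{K,M}(x,z)^2\,dz\,ds$, which does not depend on $t_0=-L$ at all.
\item Second, drop the demand for sup-norm convergence of $g$. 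On a bounded interval $I$, your Lipschitz estimate holds with $L^2$ norms: $\bigl|\int\!\!\int(g_1^2-g_2^2)\bigr|\le|I|\,\|g_1-g_2\|_{L^2(I)}\|g_1+g_2\|_{L^2(I)}$. So $g\mapsto\int\!\!\int g^2$ is continuous from $L^2(I)$ into $C(I)$.
\item Uniform convergence of the covariances of centred Gaussian processes gives convergence in law in $L^2(I)$. The continuous mapping theorem then gives convergence in distribution in $C(I)$. This is the sense the paper actually proves, and the natural one when the bases change with $L$ and $m$.
\end{itemize}
The paper itself carries this out only in the $t$-direction; the $x$-factor is handled only through covariance convergence of the features in Appendix~D.
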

\endgroup

Much like the univariate case in Theorem~\ref{theorem:concave_max_approx}, we place independent
Gaussian process priors on the peak value $f^{*}$ and the peak age $t^{*}$, both indexed by the
player embedding, so that players with similar latent coordinates are given similar peaks:
\begin{align}
  f^{*}_k(X_p) &\sim \mathcal{GP}\!\left(c_k,\; \sigma_c^2\, \mathcal{K}(\cdot,\cdot)\right), \nonumber\\
  t^{*}_k(X_p) &= a \tanh\!\left( g_k(X_p) + \tanh^{-1}\!\left(t_k / a\right) \right), \qquad
  g_k \sim \mathcal{GP}\!\left(0,\; \sigma_t^2\, \mathcal{K}(\cdot,\cdot)\right),
  \label{eq:peak_priors}
\end{align}
where $\mathcal{K}$ is the same latent-space kernel that governs the curvature weights, and $c_k$
and $t_k$ are metric-specific offsets.  The peak value is therefore a Gaussian process on the
embedding, and the peak age is the same construction bounded such that the peak age falls within our observation window.
Each metric is realized as an independent draw from these processes, while the shared kernel $\mathcal{K}$ ties
them to a common notion of player similarity.  Both processes, like the curvature weights, are realized in practice through
the random-feature approximation described in Appendix~B.5 of the Supplementary Material.

Equations~\eqref{eqn:mu_ptk}, \eqref{eq:final_gp} and \eqref{eq:peak_priors} together specify the production model.  For player $p$ and metric $k$, the aging curve $f_k(X_p, t)$ is the concave process of Equation~\eqref{eq:final_gp} evaluated at $x = X_p$, its peak located and scaled by the two processes of Equation~\eqref{eq:peak_priors}, and its curvature determined by the metric-specific weight matrix $\mathbf{A}_k$, which couples the latent space to the $M$ HSGP time bases.  We give $\mathbf{A}_k$ independent normal entries scaled by the square root of the spectral density $\sqrt{S(\sqrt{\lambda_l})}$, so that the induced prior on the second derivative is a Gaussian process over age with per-metric amplitude $\alpha_k$ and lengthscale $\ell_k$, indexed by the same embedding as the peak processes.  The remaining term in Equation~\eqref{eqn:mu_ptk}, the idiosyncratic deviation $r^k_p(t)$, is given an AR(1) prior with autocorrelation $\rho_k \sim \mathrm{Uniform}(-0.5, 0.5)$ and innovation scale $\sigma^k_{\mathrm{AR}} \sim \mathrm{HalfNormal}(0.3)$.  The full set of priors, including the dispersion and exposure parameters of each likelihood family in Table~\ref{tab:glm_framework}, is given in Appendix~B.6 of the Supplementary Material.

\section{Inference and Hyperparameter Selection}
\label{sec:inference}

We implement our model in the probabilistic programming language \texttt{NumPyro} \citep{numpyro}. Full MCMC inference in this model is challenging Bayesian factor models typically have rotational identifiability issues with either the loadings or the factors themselves, yielding a multi-modal posterior. In particular, we find that MCMC sampler gets stuck at suboptimal modes when randomly initialized.  As such, to start the the sampler near a well-identified mode by initializing the MCMC chains from a maximum a posteriori (MAP) estimate. The MAP estimate is used to fix a small set of kernel hyperparameters, specifically, the concave GP lengthscale and amplitude ($\alpha$, $l_k$, $\sigma_t$, $\sigma_c$), the structured prior scales ($\sigma_W$, $\sigma_X$), and the Gompertz global hazard offset parameters. All remaining parameters  are sampled. 

The model was implemented in the probabilistic programming language \texttt{NumPyro} \citep{numpyro} and inference run on an NVIDIA RTX A5500 GPU (24\,GB).  The MAP estimate was obtained by Stochastic Gradient Descent (SGD) and MCMC samples acquired using No U-Turn Sampling (NUTS). Posterior samples were obtained from 4 chains, run in parallel across both GPUs after 2{,}000 warmup iterations, with every post-warmup draw retained.  Convergence diagnostics are given in Appendix~E and Table~S2 of the Supplementary Material.  
We select the dimension of the latent player embedding, $q$, by conducting a holdout ablation study (Appendix~H of the Supplementary Material).  A latent dimension of $q = 10$ was used for the player embedding space, along with 5 HSGP basis functions, to approximate the CPLVM. Analysis of the resultant rotationally non-identifiable latent space $X$ is conducted through a Procrustes alignment (Appendix~B.3 of the Supplementary Material). 

Finally, we describe our procedure for generating posterior predictive draws. This is non-trivial as many metrics are conditional on each other through the exposure structure: count-based metrics depend on minutes played, which in turn depends on games played all of which are part of the joint forecast in our CPLVM model. To account for this, we generate posterior predictive curves in systematic way.  First, we sample games played (GP\%) and minutes per game from the posterior predictive distribution.  The total-minutes in a season is generated as the product of these two quantities and serves as the exposure for for all counting and rate metrics (FTA, AST, DREB, etc.). Shooting percentages are then derived from binomial given the the sampled counts.  This cascade appropriately propagates uncertainty about playing time into the uncertainty for season total production.  All code for the analyses in this paper can be found in the \href{https://github.com/abrahme/nba\_functional\_analysis}{github repository}.

\section{Results}
\label{sec:results}
In this Section, we highlight a select number of ways in which our model can be used. First, in Section \ref{sec:validation} we demonstrate the superior out of sample forecasting  performance of our CPLVM model compared to many reasonable alternatives.  We show that multi-task inference, i.e. fitting all metrics jointly, improves predictive performance over one-at-a-time models. Then, we show that the concave process model outperforms both more flexible nonparametric models which are not constrained to concavity as well as less flexible parametric alternatives.  In Section \ref{sec:posterior_curves}, we demonstrate the model fit across a number of players and show how it leads to useful forecasts of future production as well as principled retrieval of historically comparable players.
Finally, in Section \ref{sec:uncovering_metrics}, we summarize aging patterns vary across metrics, focused on the timing of the peak and how sensitive different performance metrics are to age-related change.

\subsection{Model Comparison}
\label{sec:validation}

A key advantage of the joint model is its ability to borrow information across metrics and across similar players.
Even for an established star with many seasons of data, the model in Equation~\eqref{eqn:mu_ptk} fit to a single metric sees only that metric's own history.
The joint model instead exploits the low-dimensional structure in how metrics co-evolve: the trajectories of a player's other metrics are informative about how OBPM \emph{should} evolve, which should produce tighter and more coherent posteriors.

We test this pooling directly by  refitting the model to OBPM alone and find that doing so significantly degrades held-out predictive accuracy \emph{on OBPM itself} (Appendix~F and Table~S3 of the Supplementary Material).

We further compare our concave process approach to three alternative models.  First, we consider an unconstrained Gaussian process (GP) prior on $f_k(X_p, t)$, built from the same RFF basis, which imposes smoothness but no restriction on the shape of the trajectory.  We also compare the CP prior to parametric alternatives based on a simpler quadratic model.  We consider two variants: an \emph{asymmetric} quadratic, in which curvature is constant on each side of the peak but allowed to differ between the ascent and the decline, and a \emph{symmetric} quadratic, in which a single constant curvature governs both.
These models all incorporate player-specific latent embeddings, $X_p$, the random-Fourier-feature latent kernel, the per-player autoregressive term, and the position-year de-trend as described in Equation \eqref{eqn:mu_ptk}. They differ only in what kind of constraints we impose on $f_k(X_p, t)$ over time.  Specifically, these models can be considered nested in the sense that each 
\[
  \underbrace{\text{GP}}_{\frac{df^2}{dt^2}\ \text{free}}
  \;\supset\;
  \underbrace{\text{concave}}_{\frac{df^2}{dt^2} \leq 0}
  \;\supset\;
  \underbrace{\text{asym.\ quadratic}}_{\frac{df^2}{dt^2}\ \text{piecewise const.}}
  \;\supset\;
  \underbrace{\text{sym.\ quadratic}}_{\frac{df^2}{dt^2}\ \text{const.}}
\]
As such, the concave process prior used as the featured model represents a balance between nonparametric flexibility and informed constraints.  

Differences between these nested variants are quantified by the standard deviation of the \emph{paired} pointwise differences, $\mathrm{se}_{\Delta} = \sqrt{n\operatorname{var}_i(\mathrm{elpd}_i^A - \mathrm{elpd}_i^B)}$, following \citet{vehtari2017practical}.     Because the chain is nested, we compare each variant against its immediate parent rather than against whichever model scores best in a given scheme, so that every entry measures the cost of one specific restriction against a reference that is fixed across schemes. 

Constraining the unconstrained GP to a single peak improves held-out predictive density on every scheme. The parametric restrictions beyond it do not help, suggesting that the flexibility of a non-parametric model is necessary.

Table~\ref{tab:all_summary} reports holdout $95\%$ HDI coverage together with this ladder: its columns run from least to most restricted, and each ELPPD entry is the change in held-out density per observation from imposing that column's restriction on the model to its left, with $\dagger$ marking differences exceeding $1.96\,\mathrm{se}_{\Delta}$.  Coverage for the  model sits two to four points below the 95\% across schemes.  One possible source for this under-coverage include the fact that we fix a set of hyperparameters to their MAP estimate (Section~\ref{sec:inference}), which removes one layer of uncertainty from the predictive distribution.

\input{figs/model_plots/coverage/combined_all_summary.tex}

\subsection{Forecasting and player similarity}
\label{sec:draft_evaluation}
\label{sec:posterior_curves}

Forecasting player ability is a central goal for front offices, who are constantly looking to anticipate how a player's production will evolve before committing draft capital, roster spots, and multi-year salary to that projection.  Relatedly, identifying historically similar players helps contextualize a young player's early seasons: the completed careers of comparable predecessors provide a reference for how such profiles have historically developed and declined. As a case study, we focus on the recent breakout star, seven-foot-four-inch Victor Wembanyama.  Wembanyama entered the league in 2023 amid historic expectations, and his combination of size, rim protection, and perimeter skill has few direct precedents.

In Figure \ref{fig:wemby_prior_posterior}, we illustrate how our structured prior on the latent space adapts to new data.  Because Wembanyama was the number one overall pick, the structured prior on the player embedding (Equation~\ref{eq:structured_prior}) anchors $X_p$ in sparsely populated region occupied by elite big men such like Tim Duncan and Alonzo Mourning. After three seasons however, the data shift his posterior further toward the edge of the population, into the neighborhood of Durant and James, indicating that Wembanyama has more guard like qualities than is typical of a player of his size. In order to find comparable players, we rank players with completed careers by the distance between their posterior mean embedding and Wembanyama's, computed after Procrustes alignment of the posterior draws (Appendix~B.3 of the Supplementary Material). His five nearest neighbors are David Robinson, Alonzo Mourning, Patrick Ewing, Derrick Coleman and Hakeem Olajuwon.  Four of the five are rim-protecting centers; the fifth, Derrick Coleman, is a power forward.

In Figure~\ref{fig:wemby_neighbor_curves} we overlay the posterior mean production curves of Wembanyama and his five nearest completed-career neighbors across AST, BLK, MPG and OBPM, illustrating how closely the model places his projected career alongside theirs.  Figure~S3 of the Supplementary Material shows a complementary view: the model forecast for Wembanyama across OBPM, FG2\%, FG2A and FTA given data from three recorded seasons of play (2023--2026), shown above the fits to career data from David Robinson and Alonzo Mourning with observed data overlayed. From the model projections, we expect Wembanyama's offensive performance, as measured by OBPM, to peak at around 25 years of age. 

\begin{figure}[tbp]
  \centering
  \includegraphics[width=0.5\linewidth]{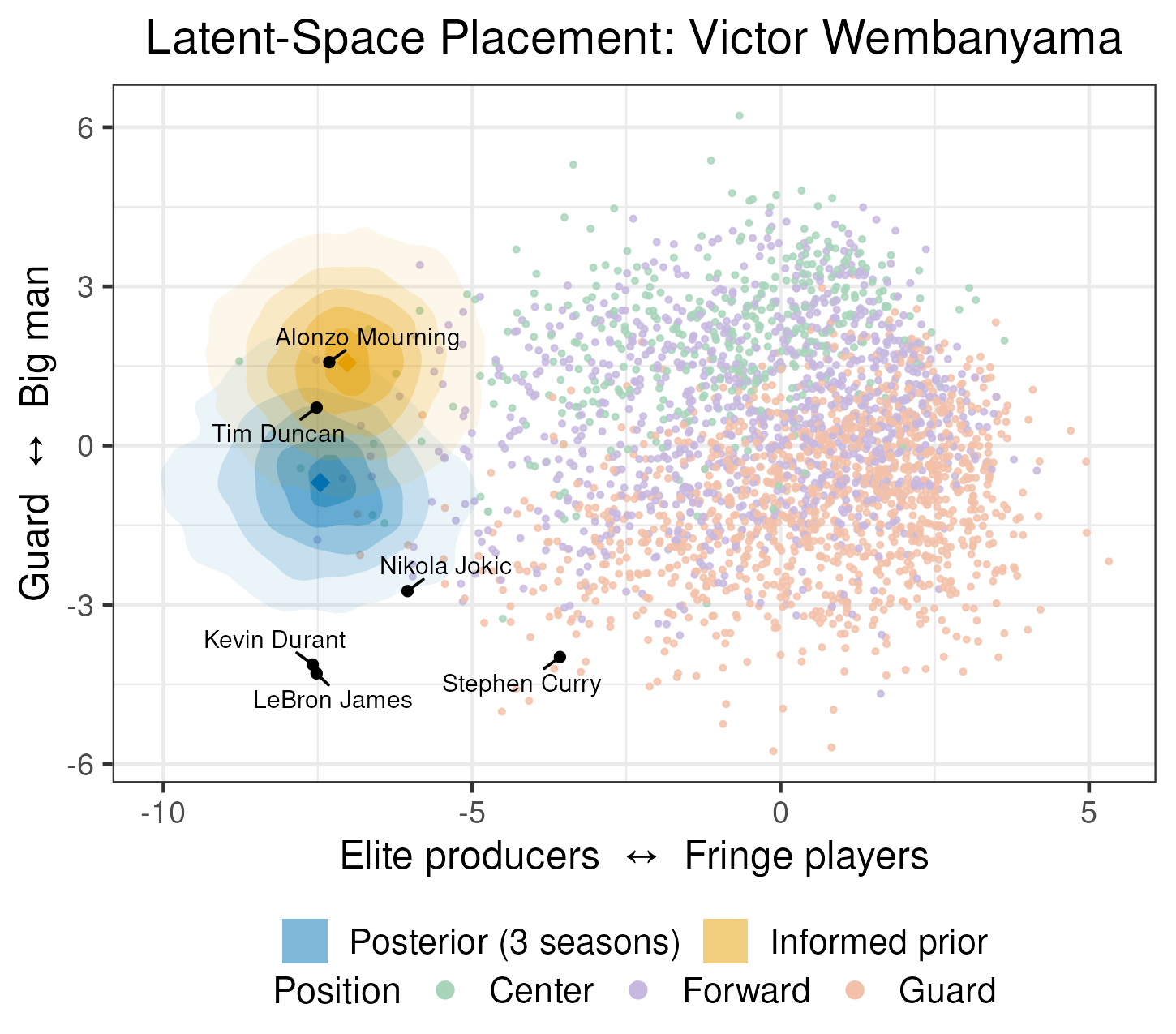}
  \caption{Prior and posterior placement of Victor Wembanyama in the latent space.  Background points show all players at their posterior mean embeddings, projected onto the first two principal components of that embedding and coloured by listed position. The horizontal axis separates elite producers from fringe players, and the vertical axis separates guards from big men.  The orange density reflects the covariate-anchored prior density of $X$ for Wembanyama (Equation~\ref{eq:structured_prior}). The blue density depicts the posterior over $X_p$ given three observed seasons of play. Diamonds mark the prior and posterior means.  Draft standing, height, and positional role alone place him at the thinly populated big-man edge of the population near Tim Duncan and Alonzo Mourning; the observed seasons move him further outward, toward Kevin Durant and LeBron James.}
  \label{fig:wemby_prior_posterior}
\end{figure}

\begin{figure}[tbp]
  \centering
  \includegraphics[width=\linewidth]{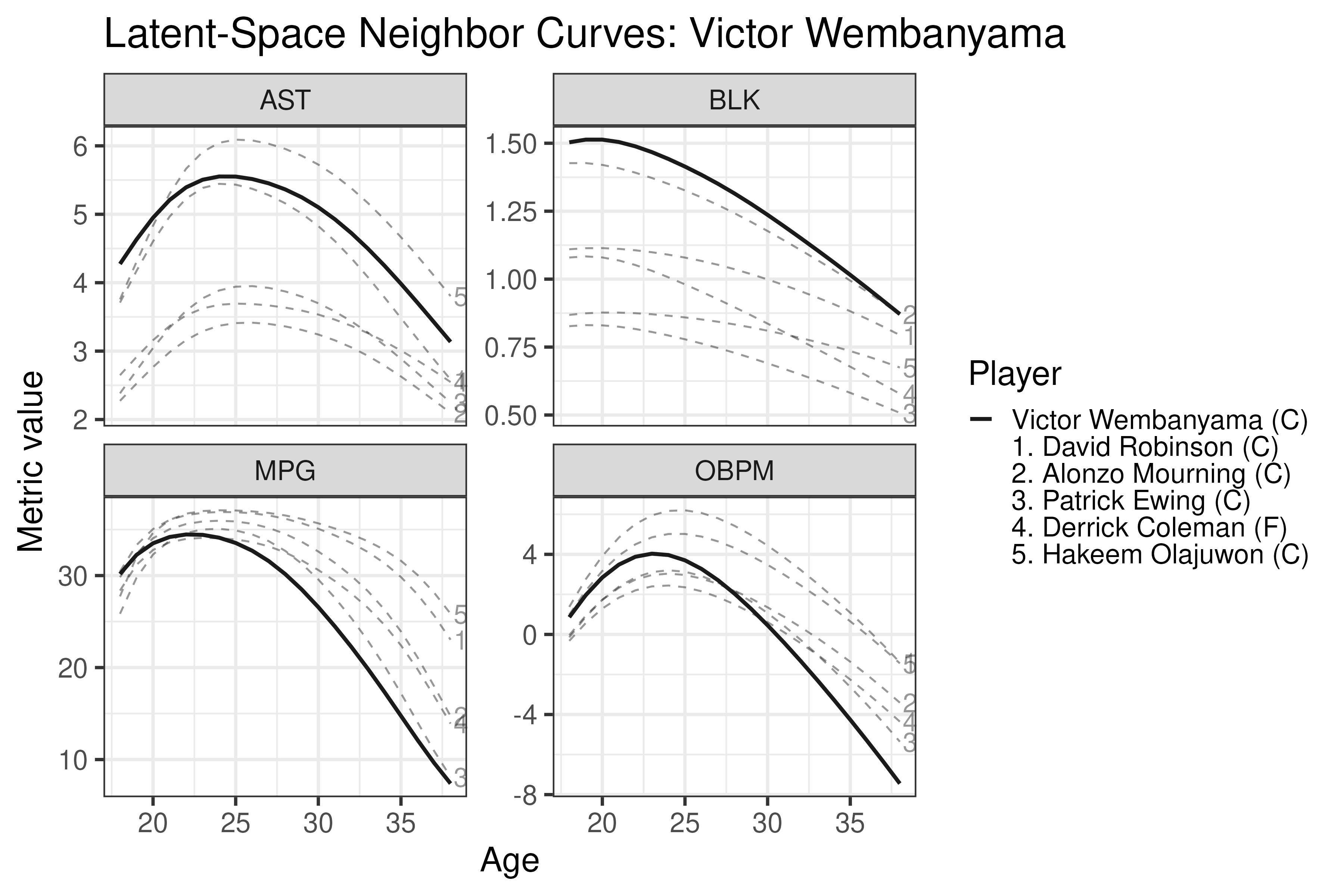}
  \caption{Posterior mean trajectories of $f^k(x_p, t)$ for Victor Wembanyama (solid line) and his five nearest latent-space neighbors, across AST, BLK, MPG, and OBPM. Neighbors are restricted to completed careers (at least ten observed seasons, with observations at age 34 or later).  Curves are on each metric's natural scale: counts per 36 minutes, MPG in minutes per game, and OBPM in points added per 100 possessions.}
  \label{fig:wemby_neighbor_curves}
\end{figure}

Next, we extend our forecasts to all athletes that entered the league between 2020 and 2025.  We illustrate the forecasting capability of our model, focusing on OBPM as a high-signal measure of overall offensive production. In Table~\ref{tab:breakout_cohort} we report the top five projected players per entry cohort ranked by the posterior mean of their peak OBPM. We also include 95\% credible intervals for their peak OBPM, the estimated age that they peak (and credible interval) and posterior probabilities that their career peak OBPM exceeds 2, 3 and 4. These thresholds span the range from above-average starter to All-Star caliber production: only 19 percent of NBA players in our dataset have a measured value of OBPM greater than $2$ by the end of their career.  This drops to 12\% for players that ever exceed an OBPM of $3$ and $7\%$ for players that ever exceed an OBPM of $4$.  The 2020--2021 entry cohorts show high mean peak OBPM for their top players (Zion Williamson, Tyrese Haliburton, Anthony Edwards) with near-certain $P(\text{OBPM}{>}2)$, while the 2025 cohort, with only two seasons of data, yields more diffuse probabilities. Victor Wembanyama tops the 2024 entry cohort, with an expected OBPM peak of 4.2. For reference, only about a third of number one overall picks like Wembanyama ever exceed an OBPM of 4.  Thus, while a priori we expect this outcome to be somewhat unlikely, after observing three years of play, the model is more confident that Wembanyama is likely to be an elite caliber player.

\input{figs/nba_convex_max_tvrflvm_AR_posyear_long/stratified_next_k/mcmc/plots/latent_space/map/breakout_by_cohort.tex}

\subsection{Characterizing Aging Patterns}
\label{sec:uncovering_metrics}
In this Section, we explore variation in the timing of peak performance both across metrics and players. In order to understand heterogeneity in the timing of peak performance across metrics, we compute the posterior mean of each player's age of peak performance across metrics. This is only made possible by the reparameterization of the concave process to explicitly encode $f^{*}$ and $t^{*}$, which allows us to explicitly sample the posterior of the peak value for a metric, as well as the peak age. These posterior values act as smoothed estimates of an individual's observed peak value and peak age, allowing us to perform Principal Component Analysis (PCA) on the resultant matrix without introducing noise from the observational data. We plot the loadings for each of the metrics for the first two principal components, as well as the principal component scores in Figure~\ref{fig:metric_loadings}. The first principal component captures variation in timing of the peak for nearly all metrics, separating early peakers (Jonny Flynn) from late (Steve Nash), while the second component contrasts the timing of rebounding, shot-blocking, defensive impact and two-point efficiency with that of three-point attempt volume.

\begin{figure}[tbp]
    \centering
    \begin{minipage}[t]{0.49\linewidth}
      \centering
      \includegraphics[width=\linewidth]{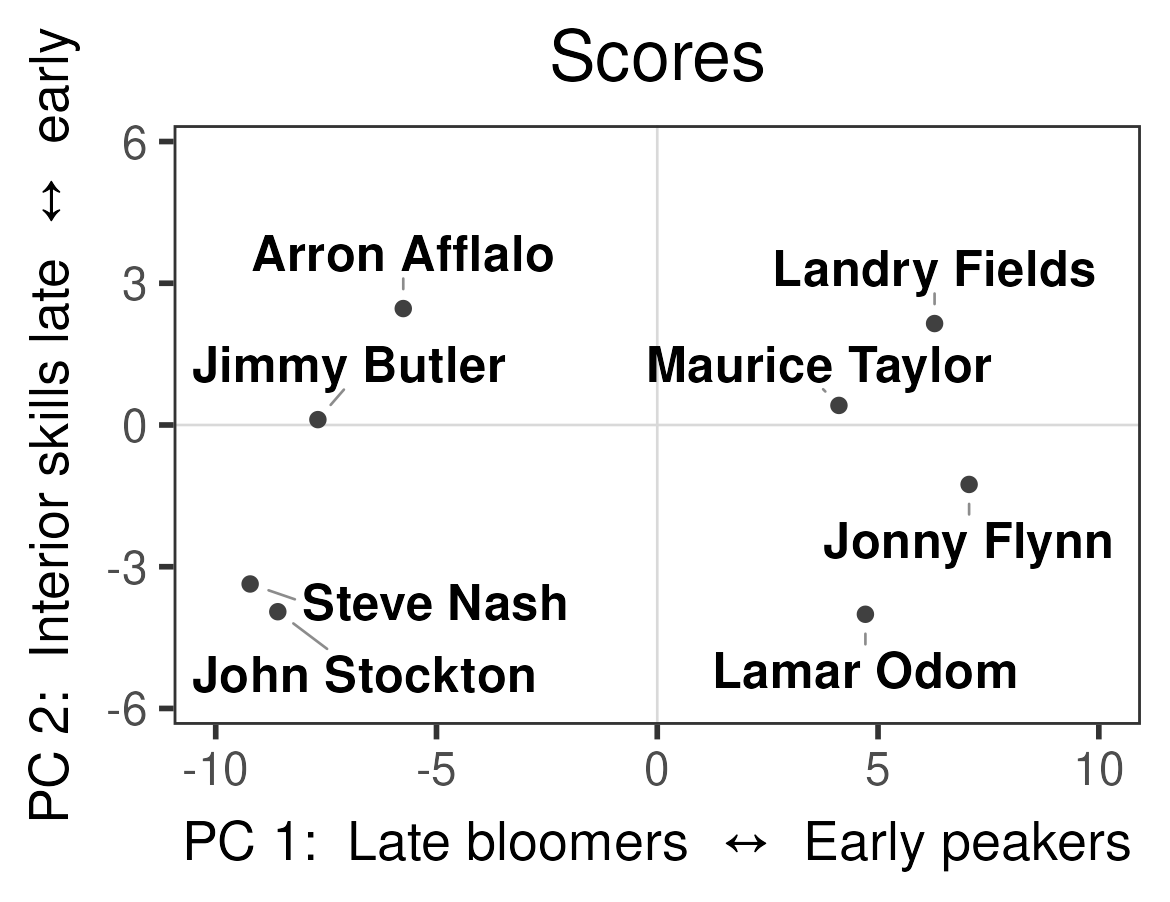}
    \end{minipage}%
    \hfill
    \begin{minipage}[t]{0.49\linewidth}
      \centering
      \includegraphics[width=\linewidth]{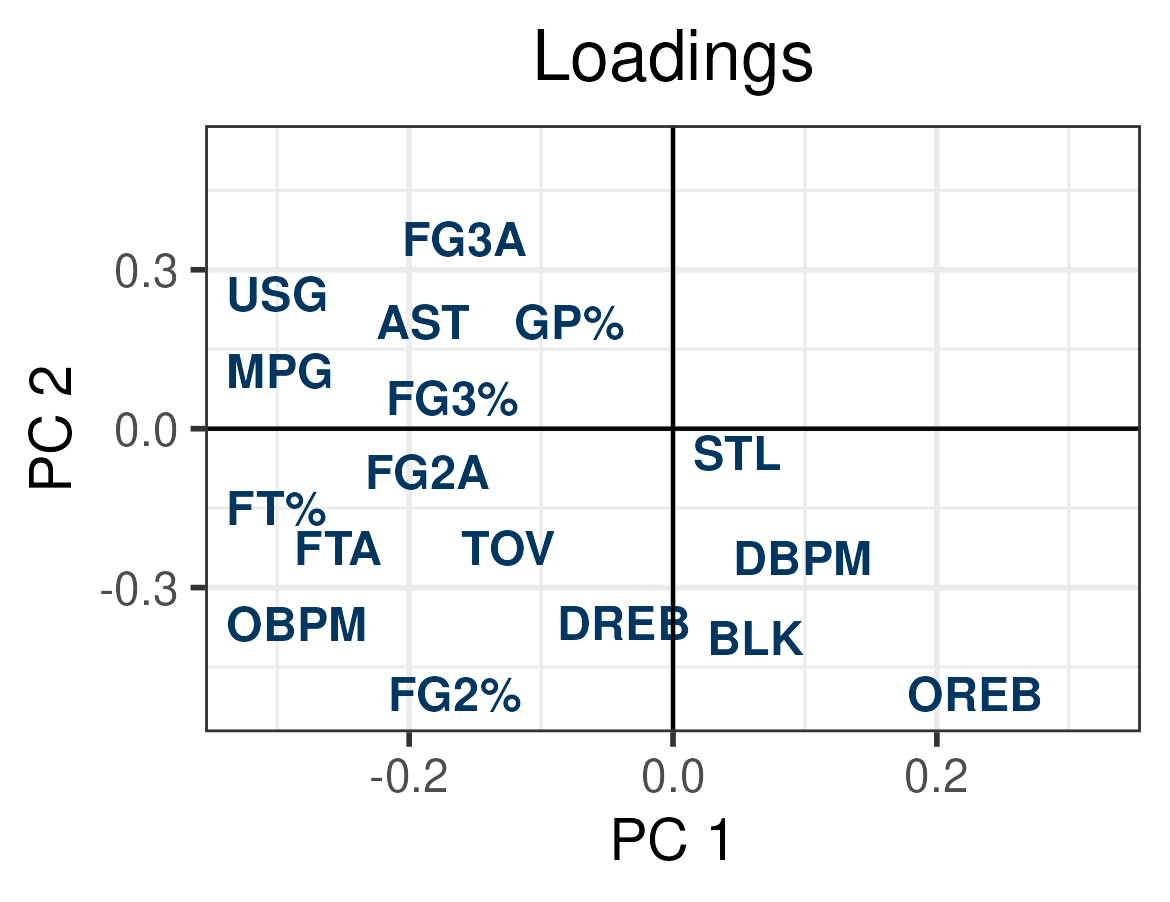}
    \end{minipage}

    \caption{PCA on posterior mean peak ages by metric: player scores on the first two principal components (left) and the corresponding metric loadings (right). In the left panel, two representative players are labelled per quadrant. The horizontal axis (PC1) captures overall peak timing. All but four metrics load with the same sign (the exceptions are DBPM, STL, BLK, and OREB), so players toward the left peak later across most of the board (e.g.\ Nash, Stockton) and those toward the right peak earlier (e.g.\ Jonny Flynn, Landry Fields). The vertical axis (PC2) contrasts the timing of rebounding, shot-blocking, defensive impact and two-point efficiency with that of three-point attempt volume. Players near the top have their three-point volume peak earlier (e.g.\ Arron Afflalo), while players near the bottom show the reverse (e.g.\ Nash, Stockton).}
    \label{fig:metric_loadings}
\end{figure}

In Figure~\ref{fig:peaks} we plot the population distribution of the age of peak performance, split by metric. The vast majority of metrics peak between the ages of 20 and 25, with the exception of assists, DBPM, and three point field goal attempts. We find that metrics that are  more dependent on athleticism tend to peak earlier.  For example, blocking a shot requires vertical explosiveness and the recovery speed to contest at the rim.  Offensive and defensive rebounding reward second-jump quickness and the ability to hold position in traffic.  By contrast, FT\%, FG3\%, and AST  are primarily skill-driven since they largely reflect the result of a technique refined through repetition \citep{ericsson1993deliberate}. These metrics tend to peak later. Three-point attempt rate is the clear outlier, continuing to rise into a player's early thirties as shooters add volume, perhaps to compensate for their declining ability to generate two point field goals. 

In order to properly contextualize these values, we also compute measures of how much change we can expect in a career, relative to between-player-variation in peak performance.  Specifically, we compute the expected yearly improvement over the four years before the peak, $\mathbb{E}_p[(f_p^{*} - f_p(a^{-}))/(t^{*} - a^{-})]/\mathrm{sd}_p(f_p^{*})$, as well as the expected yearly decline over the four years after it, $\mathbb{E}_p[(f_p^{*} - f_p(a^{+}))/(a^{+} - t^{*})]/\mathrm{sd}_p(f_p^{*})$, taken over all players for each specific metric, where $a^{-} = \max(t^{*}-4,\,18)$ and $a^{+} = \min(t^{*}+4,\,38)$ stop the window at the boundary of the modelled ages. These values are depicted for all metrics in Figure~\ref{fig:peaks_combined}.  Metrics further from the origin  change the most over a career, whereas those that are closer to the origin  tend to change relatively little. This suggests that aging greatly affects metrics such as minutes per game, OBPM, games played while leaving metrics such as blocks, defensive rebounds, and steals relatively intact. 

Blocks, for example, ``peak'' very early (even before most athlete's enter the NBA) but do not decline much over an athlete's career, suggesting block rate is dependent more so on player style rather than aging. 
On the other hand, OBPM, FG2\% and minutes per game change more dramatically over an athlete's career. If we assume that shooting in isolation is a technique driven ability, then we can attribute the aging derived change in two point field goals to a drop in athleticism which prevents the shooter from generating clean physical separation from their defenders. This in part can explain drops in OBPM and minutes per game, as a diminished ability to pressure defenses can cause a drop in impact on the offensive end and meaningful minutes. 

We also identify metrics which decline faster than they rise (and vice versa) by examining whether the metrics lie above the y=x line in Figure~\ref{fig:peaks_ascent_decay}.  For most metrics, the rate of rise and fall are roughly the same. However, minutes per game decay faster than they rise suggesting an athlete quickly becomes less durable as their career progresses, either playing less due to injury, capacity, or injury prevention protocols. Two-point shooting efficiency has the opposite asymmetry, improving rapidly early in a career and decaying more slowly post-peak. Ultimately, these multi-metric aging trends can help front offices and coaches understand and calibrate realistic performance expectations.

\begin{figure}[tbp]
    \centering
    \begin{subfigure}[b]{0.49\textwidth}
        \centering
        \includegraphics[width=\linewidth]{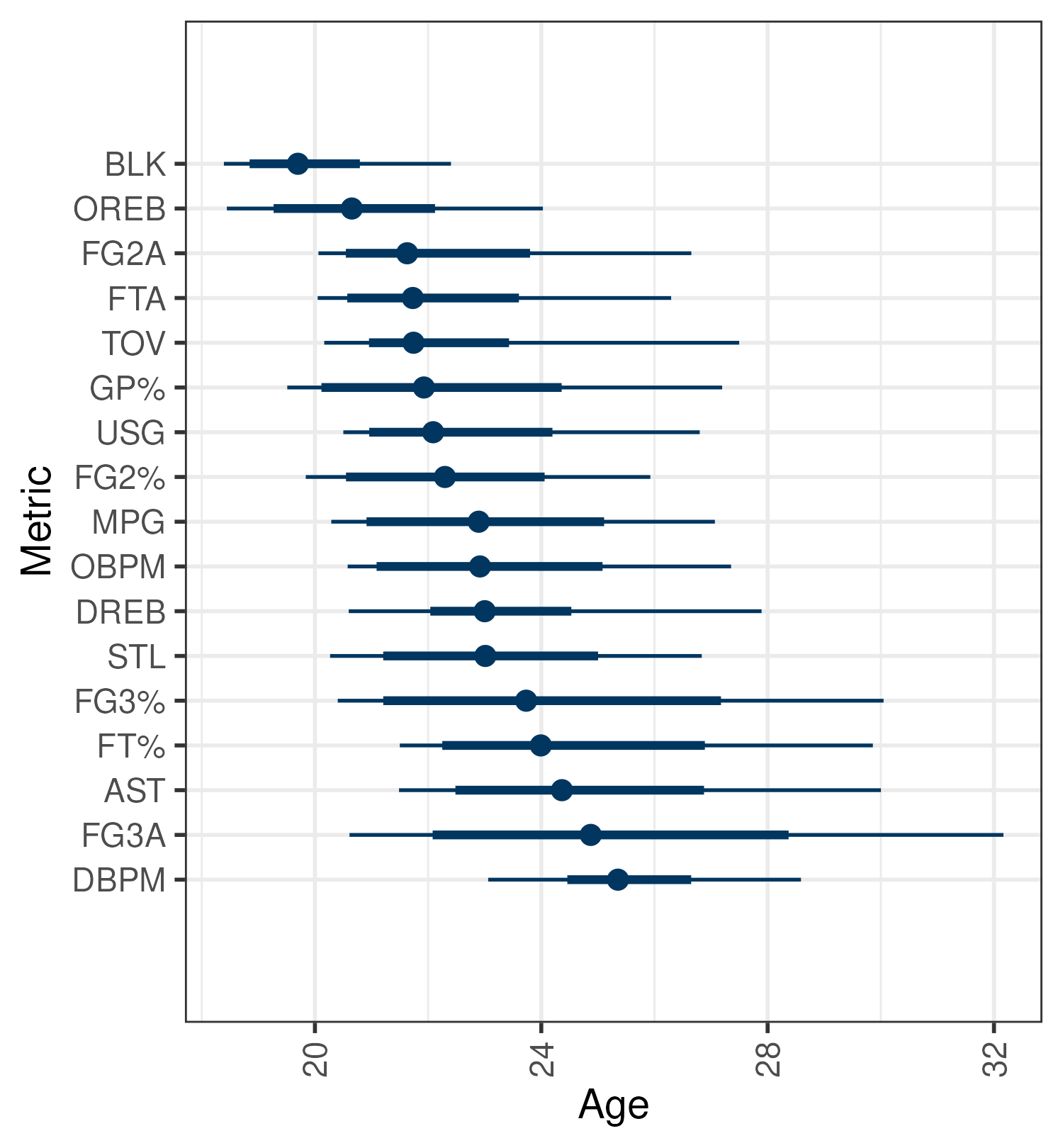}
        \caption{Peak age by metric}
        \label{fig:peaks}
    \end{subfigure}\hfill
    \begin{subfigure}[b]{0.49\textwidth}
        \centering
        \includegraphics[width=\linewidth]{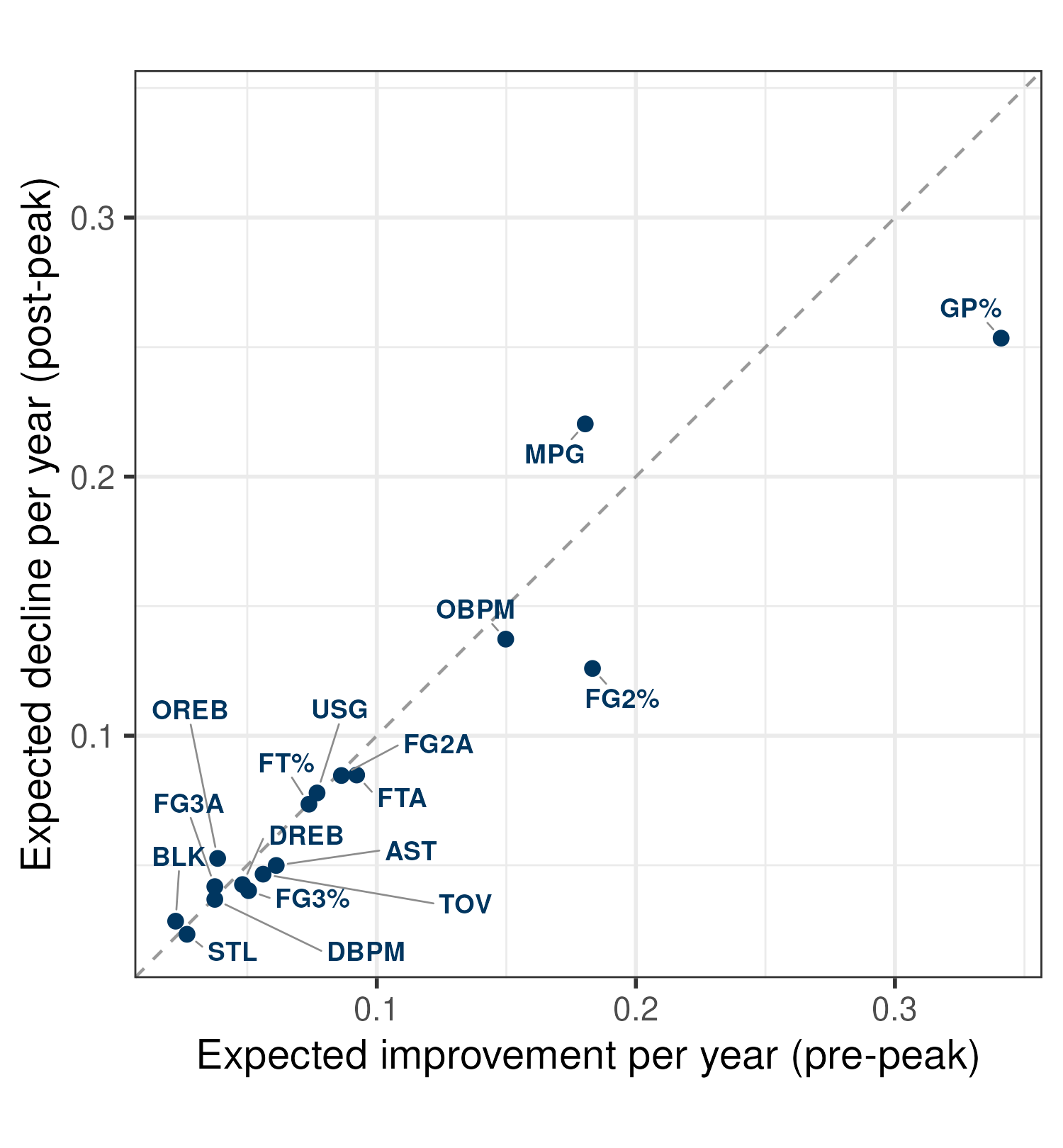}
        \caption{Expected pre-peak improvement vs.\ post-peak decline}
        \label{fig:peaks_ascent_decay}
    \end{subfigure}

    \caption{(a) Distribution of posterior means for the age of peak, separated by metric and ordered by population mean (dot), with 50\% (thick) and 95\% (thin) intervals. BLK and OREB peak earliest; the shooting metrics FG3\%, FT\%, and FG3A sit at the late end of the ordering. (b) Expected improvement per year over the four years before the peak against expected decline per year over the four years after, one point per metric. The horizontal coordinate is $\mathbb{E}_p[(f^{*} - f(a^{-}))/(t^{*} - a^{-})]/\mathrm{sd}_p(f^{*})$ and the vertical coordinate is $\mathbb{E}_p[(f^{*} - f(a^{+}))/(a^{+} - t^{*})]/\mathrm{sd}_p(f^{*})$, with $a^{-} = \max(t^{*}-4,\,18)$ and $a^{+} = \min(t^{*}+4,\,38)$. A metric on the dashed line rises and falls at the same normalised rate, one above it decays faster than it ascends. The largest asymmetry above the line belongs to MPG, which decays faster than it rose. OREB and BLK sit just above it. GP\% and FG2\% lie furthest below the line, rising faster than they decline.}
    \label{fig:peaks_combined}
\end{figure}

\section{Conclusion and Future Work}
\label{sec:conclusion}

We introduced a joint Bayesian model for multi-metric NBA career trajectories that combines two novel components 1) a Concave Process (CP) prior over production curves that encodes the biological constraint of concave aging, and 2) a latent variable model (GPLVM) that embeds all 17 metrics into a shared low-dimensional player representation.

Our results answer the three questions posed in Section~\ref{sec:contributions}.  First, the model renders peak age and peak value explicit posterior quantities, so a career is characterized not by a single summary but by when a player peaks, how high, and with what uncertainty. Players with limited data inherit credible trajectories from their nearest latent neighbors, allowing us to make multi-metric forecasts for rookies and young players.  Second, different aspects of performance are affected by aging in different way: rebounding and shot-blocking peak in a player's early twenties, shooting efficiency peaks in the mid-twenties, and three-point attempt volume continues rising into the early thirties.  Third, multi-task inference improves performance forecasts, emphasizing the value of multivariate predictive models in sports.

Several limitations should be noted.
First, the model is estimated on players who appeared in at least one NBA game, introducing survivor bias. Players who failed to reach or sustain NBA careers are unobserved, so the posteriors describe the distribution over NBA-caliber trajectories rather than the full talent distribution.
Second, the latent kernel is stationary, so similarity between players depends only on the distance between their embeddings and not on where in the latent space they sit.  A non-stationary kernel could allow for more flexible modeling.

There are also several interesting extensions and generalizations that may be considered.  
The most immediate is to apply the framework to other professional sports leagues where longitudinal performance data are available (e.g., soccer, hockey, baseball), testing whether CPLVM structure generalize beyond basketball.
In addition, our model could be used to estimate the impacts of injury by comparing counterfactual career trajectories to observed post-injury performance.  
Finally, integrating the trajectory model with contract-length and salary-cap data would allow front offices to assess not just athletic performance but economic value over the duration of a proposed contract.

\clearpage

\clearpage
{\sloppy
\bibliographystyle{plainnat}
\bibliography{references}
}

\clearpage
\section*{Supplementary Material}
\setcounter{figure}{0}\renewcommand{\thefigure}{S\arabic{figure}}\def\theHfigure{S\arabic{figure}}
\setcounter{table}{0}\renewcommand{\thetable}{S\arabic{table}}\def\theHtable{S\arabic{table}}
\setcounter{equation}{0}\renewcommand{\theequation}{S\arabic{equation}}\def\theHequation{S\arabic{equation}}
\setcounter{theorem}{0}\renewcommand{\thetheorem}{S\arabic{theorem}}\def\theHtheorem{S\arabic{theorem}}
\setcounter{proposition}{0}\renewcommand{\theproposition}{S\arabic{proposition}}\def\theHproposition{S\arabic{proposition}}
\setcounter{definition}{0}\renewcommand{\thedefinition}{S\arabic{definition}}\def\theHdefinition{S\arabic{definition}}
\input{appendix}

\end{document}

%% file: figs/model_plots/coverage/combined_all_summary.tex
\begin{table}[htbp]
\centering
\caption{Predictive performance across holdout schemes for the nested model chain, ordered least to most restricted. \textbf{Coverage}: holdout 95\% HDI coverage, bold closest to nominal. \textbf{ELPPD}: change in log predictive density per held-out observation from imposing that column's restriction on the model to its left; positive favours the restriction, and $\dagger$ marks $|\Delta| > 1.96\,\mathrm{se}_{\Delta}$. Performance metrics only.}
\label{tab:all_summary}
\resizebox{\ifdim\width>\linewidth\linewidth\else\width\fi}{!}{\begin{tabular}{lcccc}
\toprule
Holdout Scheme & GP & CP & Asym Quadratic & Quadratic \\
\midrule
\multicolumn{5}{l}{\textbf{95\% Coverage}} \\
Hold-out Last $k$ & 86.6\% & \textbf{90.7\%} & 90.4\% & 90.4\% \\
Hold-out First $k$ & 87.1\% & \textbf{91.8\%} & 91.2\% & 91.2\% \\
Random Interior & 84.5\% & \textbf{91.6\%} & 91.5\% & 91.5\% \\
Hold-out Peak & 87.6\% & \textbf{92.6\%} & 92.3\% & 92.3\% \\
Stratified Next $k$ & 88.0\% & \textbf{91.7\%} & 91.6\% & 91.4\% \\
\midrule
\multicolumn{5}{l}{\textbf{$\Delta$ELPPD}} \\
Hold-out Last $k$ & \emph{ref} & \textbf{+0.25$\dagger$} & -0.015$\dagger$ & -0.0038 \\
Hold-out First $k$ & \emph{ref} & \textbf{+0.31$\dagger$} & -0.012$\dagger$ & +0.0028 \\
Random Interior & \emph{ref} & \textbf{+0.46$\dagger$} & -0.044$\dagger$ & +0.018$\dagger$ \\
Hold-out Peak & \emph{ref} & \textbf{+0.32$\dagger$} & +0.0020 & +0.014$\dagger$ \\
Stratified Next $k$ & \emph{ref} & \textbf{+0.25$\dagger$} & -0.016$\dagger$ & -0.0087$\dagger$ \\
\bottomrule
\end{tabular}}
\end{table}

%% file: figs/nba_convex_max_tvrflvm_AR_posyear_long/stratified_next_k/mcmc/plots/latent_space/map/breakout_by_cohort.tex
\begin{table}[htbp]
  \centering\small
  \resizebox{\linewidth}{!}{\begin{tabular}{rlrlllrrr}
    \toprule
     & & & & \multicolumn{2}{c}{OBPM} & & & \\
    \cmidrule(lr){5-6}
    Rank & Player & Age & Pos. & Peak Value & Peak Age & P($>$2) & P($>$3) & P($>$4) \\
    \midrule
    \multicolumn{9}{l}{\textit{Entry 2025 (2 seasons)}} \\
    \midrule
    1 & Jared McCain & 21 & G & 1.8 {\scriptsize [-1.0, 5.2]} & 23.6 {\scriptsize [20.7, 27.0]} & 43\% & 21\% & 9\% \\
    2 & Zaccharie Risacher & 20 & F & 1.4 {\scriptsize [-1.6, 4.9]} & 23.4 {\scriptsize [20.7, 25.6]} & 34\% & 16\% & 6\% \\
    3 & Ron Holland & 20 & F & 0.8 {\scriptsize [-2.4, 4.3]} & 23.7 {\scriptsize [20.7, 26.3]} & 22\% & 9\% & 3\% \\
    4 & Stephon Castle & 21 & G & 0.6 {\scriptsize [-2.0, 3.7]} & 22.6 {\scriptsize [20.5, 25.0]} & 17\% & 6\% & 2\% \\
    5 & Dalton Knecht & 24 & F & 0.5 {\scriptsize [-1.7, 2.9]} & 24.1 {\scriptsize [21.4, 27.8]} & 9\% & 2\% & 0\% \\
    \midrule[0.4pt]
    \multicolumn{9}{l}{\textit{Entry 2024 (3 seasons)}} \\
    \midrule
    1 & Victor Wembanyama & 22 & C & 4.2 {\scriptsize [1.7, 7.3]} & 23.4 {\scriptsize [21.1, 25.9]} & 95\% & 80\% & 54\% \\
    2 & Anthony Black & 22 & G & 2.7 {\scriptsize [-1.5, 6.3]} & 26.3 {\scriptsize [22.5, 28.7]} & 65\% & 44\% & 24\% \\
    3 & Amen Thompson & 23 & F & 2.0 {\scriptsize [-0.2, 4.6]} & 23.9 {\scriptsize [21.0, 27.2]} & 46\% & 19\% & 6\% \\
    4 & Brandon Miller & 23 & F & 1.5 {\scriptsize [-0.7, 4.1]} & 23.7 {\scriptsize [21.3, 25.6]} & 34\% & 12\% & 3\% \\
    5 & Cam Whitmore & 21 & F & 1.4 {\scriptsize [-1.1, 4.2]} & 22.4 {\scriptsize [20.2, 25.3]} & 30\% & 11\% & 3\% \\
    \midrule[0.4pt]
    \multicolumn{9}{l}{\textit{Entry 2023 (4 seasons)}} \\
    \midrule
    1 & Jalen Williams & 24 & G & 5.3 {\scriptsize [2.7, 7.9]} & 26.5 {\scriptsize [24.6, 28.4]} & 99\% & 96\% & 84\% \\
    2 & Paolo Banchero & 23 & F & 4.1 {\scriptsize [2.0, 6.7]} & 23.5 {\scriptsize [21.9, 25.2]} & 97\% & 84\% & 53\% \\
    3 & Dyson Daniels & 22 & G & 2.7 {\scriptsize [-0.3, 5.8]} & 25.9 {\scriptsize [23.2, 28.3]} & 68\% & 42\% & 20\% \\
    4 & Christian Braun & 24 & G & 2.6 {\scriptsize [-0.5, 5.6]} & 27.5 {\scriptsize [23.1, 30.3]} & 65\% & 40\% & 18\% \\
    5 & Shaedon Sharpe & 22 & G & 2.4 {\scriptsize [-0.3, 5.2]} & 24.4 {\scriptsize [21.4, 26.5]} & 60\% & 32\% & 12\% \\
    \midrule[0.4pt]
    \multicolumn{9}{l}{\textit{Entry 2022 (5 seasons)}} \\
    \midrule
    1 & Cade Cunningham & 24 & G & 4.0 {\scriptsize [1.9, 6.5]} & 24.0 {\scriptsize [22.5, 26.1]} & 97\% & 84\% & 45\% \\
    2 & Trey Murphy III & 25 & F & 3.7 {\scriptsize [1.6, 6.3]} & 26.2 {\scriptsize [23.9, 28.9]} & 94\% & 71\% & 38\% \\
    3 & Franz Wagner & 24 & F & 3.6 {\scriptsize [1.5, 6.1]} & 24.7 {\scriptsize [23.2, 26.6]} & 92\% & 68\% & 34\% \\
    4 & Cam Thomas & 24 & G & 3.5 {\scriptsize [0.7, 6.6]} & 25.8 {\scriptsize [23.1, 28.5]} & 83\% & 60\% & 35\% \\
    5 & Evan Mobley & 24 & F & 3.3 {\scriptsize [1.1, 5.6]} & 23.9 {\scriptsize [21.8, 25.9]} & 88\% & 61\% & 23\% \\
    \midrule[0.4pt]
    \multicolumn{9}{l}{\textit{Entry 2021 (6 seasons)}} \\
    \midrule
    1 & Tyrese Haliburton & 25 & G & 6.4 {\scriptsize [4.3, 8.7]} & 25.7 {\scriptsize [23.7, 28.3]} & 100\% & 100\% & 99\% \\
    2 & Anthony Edwards & 24 & G & 4.5 {\scriptsize [2.6, 6.5]} & 24.0 {\scriptsize [22.7, 25.4]} & 100\% & 95\% & 72\% \\
    3 & Tyrese Maxey & 25 & G & 4.4 {\scriptsize [2.2, 6.7]} & 26.1 {\scriptsize [24.2, 28.1]} & 98\% & 89\% & 62\% \\
    4 & LaMelo Ball & 24 & G & 4.2 {\scriptsize [2.4, 6.4]} & 23.0 {\scriptsize [20.8, 25.2]} & 99\% & 90\% & 55\% \\
    5 & Desmond Bane & 27 & G & 3.8 {\scriptsize [2.0, 6.1]} & 27.2 {\scriptsize [25.1, 29.5]} & 97\% & 76\% & 37\% \\
    \midrule[0.4pt]
    \multicolumn{9}{l}{\textit{Entry 2020 (7 seasons)}} \\
    \midrule
    1 & Zion Williamson & 25 & F & 5.2 {\scriptsize [3.6, 7.2]} & 22.8 {\scriptsize [20.9, 25.4]} & 100\% & 100\% & 93\% \\
    2 & Ja Morant & 26 & G & 4.6 {\scriptsize [2.7, 6.1]} & 23.3 {\scriptsize [21.9, 25.1]} & 100\% & 96\% & 77\% \\
    3 & Tyler Herro & 26 & G & 3.4 {\scriptsize [1.3, 5.2]} & 25.4 {\scriptsize [23.8, 27.6]} & 89\% & 69\% & 24\% \\
    4 & Darius Garland & 26 & G & 3.3 {\scriptsize [0.6, 5.7]} & 25.5 {\scriptsize [23.2, 27.6]} & 83\% & 67\% & 30\% \\
    5 & Cameron Johnson & 29 & F & 2.6 {\scriptsize [1.1, 4.3]} & 27.1 {\scriptsize [25.2, 29.1]} & 73\% & 29\% & 5\% \\
    \bottomrule
  \end{tabular}}
  \caption{Top 5 players by posterior mean peak OBPM per entry cohort. Peak Value and Peak Age are posterior means with 95\% credible intervals in brackets. P($>$2), P($>$3), P($>$4) are posterior probabilities that peak OBPM exceeds the threshold; empirically, 19\% of players peaked above OBPM\,$=$\,2, 12\% above OBPM\,$=$\,3, and 7\% above OBPM\,$=$\,4 (career peak per player).}
  \label{tab:breakout_cohort}
\end{table}

%% file: appendix.tex
\appendix

\section{Enforcing Concavity}

\label{appendix:convexity}
\subsection{Validity of Prior}
\label{appendix:convexity_proof}
\begin{proof}
The map $\Phi(a,b,g)(t) = a + b(t-t_0) - \int_{t_0}^t\int_{t_0}^s g(z)^2\,dz\,ds$ is continuous since it composes the bounded, continuous operations (addition, squaring, integrating).  A continuous
transformation of a distribution is a valid distribution over the image of the transformation.  That
image consists of concave functions, since $\Phi(a,b,g)^{''} = -g^2 \le 0$.
\end{proof}

\subsection{Univariate Case}
\label{appendix:univariate_concavity}
Considering we only wish our function $f(t)$ to be concave in the $t$ argument, we wish to find all set of $f(t)$ such that $\frac{\partial ^2 f(t)}{\partial t^2} \geq 0$. As such, our desired function has the form
\begin{align}
\label{eq:concave_uni_rep}
    f(t) = a + b(t-t_0) - \int_{t_0}^t \int_{t_0}^s \frac{\partial ^2 f(z)}{\partial z^2}dzds 
\end{align}
where $a, b$ are some constants of integration with respect to $t$. We show in the subsequent section how these constants of integration can be defined to modulate the shape of $f(t)$.

Following the line of reasoning from \cite{convexityconstraints}, we assume that $\frac{\partial ^2 f(t)}{\partial t^2}$ can be modeled by $[g(t)]^2$, where $g(t) \sim \mathcal{GP}(0, K_t)$. 
In order to enforce concavity across the $t$ dimension, we use the Hilbert Space Gaussian Process (HSGP) approximation mentioned in \cite{convexityconstraints} to approximate $g(t) \approx \sum_{l=1}^K S(\sqrt{\lambda_l})^{1/2} \alpha_l \psi_l(t)$.

Consequently,
\begin{align}
\label{eq:deriv_approx_uni_gp}
g(t)^2 \approx \sum_{l=1}^K \sum_{z=1}^K S(\sqrt{\lambda_z})^{1/2} S(\sqrt{\lambda_l})^{1/2} \alpha_l \alpha_z \psi_l(t) \psi_z(t)\nonumber \\
 \approx  \sum_{l=1}^K \sum_{z=1}^K\tilde{\alpha_l} \tilde{\alpha_z} \psi_l(t)\psi_z(t) 
\end{align}

after letting $\tilde{\alpha}_l = S(\sqrt{\lambda_l})^{1/2} \alpha_l $.
As such, substituting \eqref{eq:deriv_approx_uni_gp} into \eqref{eq:concave_uni_rep} gives
\begin{align}
    & f(t) \approx a + b(t-t_0) - \sum_{l=1}^K \sum_{z=1}^K\tilde{\alpha_l} \tilde{\alpha_z}  \int_{t_0}^t \int_{t_0}^s \psi_l(w)\psi_z(w)  dwds
\end{align}

Rewriting the above in terms of matrix products, and setting $t_0 = -L$,  yields
\begin{align}
\label{eq:final_gp_app_uni}
    f(t) \approx a + b(t+L) -  \mathbf{\tilde{\alpha}}^T \Psi(t) \mathbf{\tilde{\alpha}} 
\end{align}

Here, $\mathbf{\tilde{\alpha}}$ is a $K$ vector with entries given by $\tilde{\alpha_l}$. 
The entries of the $K \times K$ matrix $\Psi(t)$ are given by
\begin{align}
\label{eq:phi_app}
\Psi_{lz}(t) =
\begin{cases} \frac{cos(\epsilon_{lz}^{+}(t+L)) - 1}{2L(\epsilon_{lz}^{+})^2} +
\frac{(t+L)^2}{4L}  & \text{if } l = z \\
\frac{cos(\epsilon_{lz}^{+}(t+L)) - 1}{2L(\epsilon_{lz}^{+})^2} - \frac{cos(\epsilon_{lz}^{-}(t+L)) - 1}{2L(\epsilon_{lz}^{-})^2} & \text{if } l \neq z
\end{cases}
\end{align}

Here, $\epsilon_{lz}^{+} = \sqrt{\lambda_l} + \sqrt{\lambda_z}$, and $\epsilon_{lz}^{-} = \sqrt{\lambda_l} - \sqrt{\lambda_z}$. Note that since the eigen-functions $\psi_l(t)$ are sinusoidal, the integral in \eqref{eq:concave_uni_rep} can be computed in closed form.

The concavity prior is placed on the latent mean $\mu_{ptk}$ in the linear predictor space; it does not, in general, guarantee concavity of the observable response after applying the inverse link.  However, the scientifically relevant properties are fully preserved.  All link functions used in the model are strictly monotone increasing.  For any strictly monotone $g^{-1}$,
\[
  \arg\max_t \; g^{-1}(\mu_{ptk}(t)) \;=\; \arg\max_t \; \mu_{ptk}(t),
\]
so the peak age $t^*_k$ is identical in the latent and observable spaces.  The peak value in the observable space, $g^{-1}(\mu_{ptk}(t^*_k))$, is a monotone function of the latent peak, so the rank ordering of peak values across players is also preserved.

\subsection{Multivariate Case}
\label{appendix:multivariate_concavity}
Considering we only wish our function $f(x,t)$ to be concave in the $t$ argument, but still incorporate information from another set of dimensions encoded by $x$, we wish to find all set of $f(x,t)$ such that $\frac{\partial ^2 f(x,t)}{\partial t^2} \geq 0$. As such, our desired function has the form
\begin{align}
\label{eq:concave_multi_rep_app}
    f(x,t) = \beta_0(x) + \beta_1(x)(t-t_0) - \int_{t_0}^t \int_{t_0}^s \frac{\partial ^2 f(x,z)}{\partial z^2}dzds
\end{align}
where $\beta_0(x), \beta_1(x)$ are constants of integration with respect to $t$. We show in the subsequent section how these constants of integration can be defined to modulate the shape of $f(x,t)$.

Following the line of reasoning from \cite{convexityconstraints}, we assume that $\frac{\partial ^2 f(x,t)}{\partial t^2}$ can be modeled by $[g(x,t)]^2$, where $g(x,t) \sim \mathcal{GP}(0, K_x \otimes K_t)$. Note that this assumes that the time variable $t$ is separate from our $x$ variable . 

Using a Random Fourier Feature (Appendix~\ref{sec:rff}) projection $\phi_i(x)$ for the $x$ dimension, we can write $g(x,t) \approx \sum_{i=1}^M \beta_i(t) \phi_i(x)$. Here, $\beta_i(t)$ are $t$-dependent coefficients of the basis expansion. In order to enforce concavity across the $t$ dimension, we use the Hilbert Space Gaussian Process (HSGP) approximation mentioned in \cite{convexityconstraints} to approximate $\beta_i(t) \approx \sum_{l=1}^K S(\sqrt{\lambda_l})^{1/2} \alpha_l^i \psi_l(t)$. Note that this approximate representation preserves the separability of the covariance structure of our desired $g(x,t)$
(Appendix~\ref{appendix:separability_properties}).

If we let $\tilde{\alpha_l^i} = \alpha_l^i S(\sqrt{\lambda_l})^{1/2}$  , then $g(x,t) \approx \sum_{i=1}^M \phi_i(x) ( \sum_{l=1}^K \tilde{\alpha_l^i} \psi_l(t))$. 

Consequently,
\begin{align}
\label{eq:deriv_approx_gp}
g(x,t)^2 \approx \sum_{i=1}^M \sum_{j=1}^M \phi_j(x)\phi_i(x) \beta_j(t) \beta_i(t) \nonumber \\
\approx \sum_{i=1}^M \sum_{j=1}^M \phi_j(x)\phi_i(x) (\sum_{l=1}^K \tilde{\alpha_l^i} \psi_l(t)) (\sum_{l=1}^K \tilde{\alpha_l^j} \psi_l(t)) \nonumber \\  
 \approx \sum_{i=1}^M \sum_{j=1}^M \phi_j(x)\phi_i(x) (\sum_{l=1}^K \sum_{z=1}^K\tilde{\alpha_l^j} \tilde{\alpha_z^i} \psi_l(t)\psi_z(t) ) 
\end{align}

As such, substituting \eqref{eq:deriv_approx_gp} into Equation~(17) of the main text gives
\begin{align}
    & f(x,t) \approx \beta_0(x) + \beta_1(x)(t-t_0) -  \nonumber \\
    & \sum_{i=1}^M \sum_{j=1}^M \phi_j(x)\phi_i(x) (\sum_{l=1}^K \sum_{z=1}^K\tilde{\alpha_l^j} \tilde{\alpha_z^i}  \int_{t_0}^t \int_{t_0}^s \psi_l(w)\psi_z(w)  dwds)
\end{align}

Rewriting the above in terms of matrix products, and setting $t_0 = -L$,  yields
\begin{align}
\label{eq:final_gp_app}
    f(x,t) \approx \beta_0(x) + \beta_1(x)(t+L) -  \mathbf{\phi}(x)^T\mathbf{\tilde{A}}^T \Psi(t) \mathbf{\tilde{A}} \mathbf{\phi}(x)
\end{align}

Here, $\mathbf{\tilde{A}}$ is a $M \times K$ matrix with entries given by $\tilde{\alpha_l^j}$. 
The entries of the $K \times K$ matrix $\Psi(t)$ are given by Equation~(14) of the main text, from the univariate case.

\subsection{Controlling the Peak Behavior}
\label{appendix:peak_value}
We may wish to place a prior on the location of the peak of our curve, as well as the value that this peak takes. Thus, our formulation begins with the following constraints:
\begin{enumerate}
    \item $f(x,t_{max}) = f_{max}$
    \item $\frac{d}{dt} f(x,t) |_{t = t_{max}} = 0$
\end{enumerate}

The first expectation illustrates that at our peak $t_{max}$, our function takes on the value $c$. The second expectation requires that our first derivative with respect to $t$ is 0 at $t_{max}$. Since our function is concave by construction, we do not need to constrain the behavior of the second derivative. Note here that $t_{max}$ and $f_{max}$ are specified apriori.

Taking the partial derivative of the main text's Equation~(18) with respect to $t$, and setting it equal to 0 yields the following requirement
\begin{align}
\label{eq:first_deriv_constraint}
    \beta_1(x) = \mathbf{\phi}(x)^T\mathbf{\tilde{A}}^T \Psi^{'}(t_{max}) \mathbf{\tilde{A}} \mathbf{\phi}(x)
\end{align}
Here, $\Psi^{'}$ is the first derivative of the entries of the main text's Equation~(14) with respect to $t$. 
Now, it remains to satisfy the peak value constraint. Formally,

\begin{align}
\label{eq:peak_value_constraint}
    \beta_0(x) = f_{max} - \beta_1(x)(t_{max} - t_0) + \mathbf{\phi}(x)^T \mathbf{\tilde{A}}^T \Psi(t_{max}) \mathbf{\tilde{A}} \mathbf{\phi}(x)
\end{align}
If we combine \eqref{eq:first_deriv_constraint} and \eqref{eq:peak_value_constraint}, and plug the resulting values for $\beta_0(x)$ and $\beta_1(x)$ into Equation~(18) of the main text, we are met with the following theorem.

\begin{align}
\label{eq:max_constraint_gp_app}
    f(x,t) \approx f_{max}  +  \mathbf{\phi}(x)^T\mathbf{\tilde{A}}^T [ \Psi(t_{max}) - \Psi(t) + \Psi^{'}(t_{max})(t - t_{max}) ] \mathbf{\tilde{A}} \mathbf{\phi}(x)
\end{align}
Note that when $t = t_{max}$, we recover $f_{max}$ as desired.

Subsequently, letting $f_{max}(x) , t_{max}(x) \sim GP(0, K_x)$ allows us to pool information across the $x$ dimension, allowing similar players to have similar peaks / peak-values.

\subsection{Convergence of the Push-Forward HSGP Approximation}

\begin{proposition}[Convergence of the HSGP approximation]
\label{prop:hsgp_convergence_proof}
Let $K$ be a stationary covariance function with spectral density $S(\cdot)$, let $g \sim GP(0,K)$, and fix
an interval of interest $[0,T]$.  For a domain $[-L,L] \supset [0,T]$ let
$\{\psi_i,\lambda_i\}_{i=1}^\infty$ be the Dirichlet Laplacian eigenpairs on $[-L,L]$, write
$\omega_i = \sqrt{\lambda_i}$, and let
\[
g_{M,L}(t)
=
\sum_{i=1}^M
\sqrt{S(\omega_i)}\,\alpha_i \psi_i(t),
\qquad
\alpha_i \stackrel{iid}{\sim} \mathcal{N}(0,1),
\]
be the HSGP approximation of \cite{Solin_2019}.
Define
\[
f(t)
=
\int_0^t \int_0^s g(z)^2\,dz\,ds,
\qquad
f_{M,L}(t)
=
\int_0^t \int_0^s g_{M,L}(z)^2\,dz\,ds.
\]
Then, as $L \to \infty$ with $M/L \to \infty$,
\[
f_{M,L} \to f
\quad \text{in distribution on } C([0,T]).
\]
\end{proposition}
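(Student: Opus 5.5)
The plan has two parts. First, show $g_{M,L}\to g$ in distribution on $C([0,T])$. Then push this convergence forward through the map
\[
\Phi(h)(t)=\int_0^t\int_0^s h(z)^2\,dz\,ds,
\]
using the continuous mapping theorem. The map $\Phi:C([0,T])\to C([0,T])$ is locally Lipschitz in the sup norm. Indeed, $|h_1^2-h_2^2|\le (\|h_1\|_\infty+\|h_2\|_\infty)\|h_1-h_2\|_\infty$, which gives
\[
\|\Phi(h_1)-\Phi(h_2)\|_\infty\le \tfrac{T^2}{2}(\|h_1\|_\infty+\|h_2\|_\infty)\|h_1-h_2\|_\infty .
\]
So $\Phi$ is continuous, and by the argument already used in Appendix~\ref{appendix:convexity_proof}, $g_{M,L}\Rightarrow g$ implies $f_{M,L}=\Phi(g_{M,L})\Rightarrow\Phi(g)=f$. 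The whole proof therefore reduces to a functional limit theorem for the HSGP approximation itself.

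For that limit I would follow the standard route for Gaussian processes on $C([0,T])$: convergence of finite-dimensional distributions plus tightness. Both $g_{M,L}$ and $g$ are centered Gaussian, so finite-dimensional convergence is equivalent to pointwise convergence of covariances:
\[
K_{M,L}(t,t')=\sum_{i=1}^M S(\omega_i)\psi_i(t)\psi_i(t')\to K(t-t').
\]
Here I would invoke the convergence result of \cite{Solin_2019}. With $\psi_i(t)=L^{-1/2}\sin(\omega_i(t+L))$ and $\omega_i=i\pi/(2L)$, the sum is a Riemann sum with mesh $\pi/(2L)\to0$ approximating $\frac{1}{\pi}\int_0^\infty S(\omega)\cos(\omega(t-t'))\,d\omega=K(t-t')$. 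There is also a boundary term in $\cos(\omega(t+t'+2L))$, which tends to zero because its oscillation is fast relative to the mesh, by a Riemann–Lebesgue argument. The cutoff $\omega_M=M\pi/(2L)\to\infty$ (this is where $M/L\to\infty$ enters) makes the truncated tail $\int_{\omega_M}^\infty S$ vanish. Doing this carefully also gives uniform convergence of $K_{M,L}$ on $[0,T]^2$, which I will need next.

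For tightness I would use Kolmogorov–Chentsov, which is the step I expect to be the main obstacle. Since the processes are Gaussian, it suffices to bound the variogram uniformly in $(M,L)$:
\[
\mathbb{E}|g_{M,L}(t)-g_{M,L}(t')|^2\le C|t-t'|^{\gamma}\quad\text{for some }\gamma>0;
\]
Gaussian moment equivalence then upgrades this to higher moments. The variogram equals $\sum_i S(\omega_i)(\psi_i(t)-\psi_i(t'))^2\le L^{-1}\sum_i S(\omega_i)\min(4,\omega_i^2|t-t'|^2)$. By the same Riemann-sum comparison this is bounded by a constant times $\int_0^\infty S(\omega)\min(1,\omega^2|t-t'|^2)\,d\omega$, uniformly once $L$ is large. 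If $S$ has a finite second moment, as it does for the squared-exponential kernel used in the paper, this yields $C|t-t'|^2$. More generally, a finite moment $\int\omega^{2\gamma}S<\infty$ suffices, and I would state that as a mild assumption on $K$. The delicate point is making the Riemann-sum bound uniform in $L$, which needs monotonicity or bounded variation of $S$ on the tail; I would establish it for SE explicitly. Tightness together with the finite-dimensional convergence gives $g_{M,L}\Rightarrow g$ in $C([0,T])$, and the continuous mapping step from the first paragraph finishes the proof.
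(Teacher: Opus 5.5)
Your argument is correct for kernels satisfying the extra spectral condition you add, which includes the squared-exponential kernel. It takes a genuinely different and heavier route than the paper.

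The paper never proves $g_{M,L}\Rightarrow g$ on $C([0,T])$. It uses the uniform convergence $K_{M,L}\to K$ from Theorem~1 of \cite{Solin_2019}, together with Gaussianity, to get $g_{M,L}\Rightarrow g$ only as random elements of $L^2([0,T])$. It then notes that your map $\Phi$ is already continuous from $L^2([0,T])$ into $C([0,T])$, for two reasons:
\begin{itemize}
\item squaring is continuous $L^2\to L^1$, since by Cauchy--Schwarz $\|h_1^2-h_2^2\|_{L^1}\le\|h_1+h_2\|_{L^2}\|h_1-h_2\|_{L^2}$;
\item $h\mapsto\int_0^t\int_0^s h(z)\,dz\,ds$ is bounded linear from $L^1$ into $C([0,T])$, with norm at most $T$.
\end{itemize}
The continuous mapping theorem then finishes the proof.

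Because the double integral supplies all the smoothing, the step you single out as the main obstacle is not needed. That step is Kolmogorov--Chentsov tightness with a variogram bound uniform in $(M,L)$. Nor does the paper need sample-path continuity of $g$, or your hypothesis $\int\omega^{2\gamma}S(\omega)\,d\omega<\infty$ together with tail regularity of $S$. The proposition is stated for a general stationary $K$ with spectral density. Your version proves it only under that added assumption. The $L^2$ route covers the stated generality, relying only on the hypotheses behind the Solin--S\"arkk\"a covariance bound.

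What your route buys in exchange is a stronger intermediate result: sup-norm functional convergence of $g_{M,L}$ itself. You would need that for statements about the curvature $-g_{M,L}^2$, rather than about its double integral.

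If you adopt the paper's route, the step it leaves terse is that uniform kernel convergence implies weak convergence in $L^2([0,T])$. The justification is as follows. Uniform convergence of continuous kernels on $[0,T]^2$ gives Hilbert--Schmidt convergence of the covariance operators. It also gives convergence of their traces $\int_0^T K_{M,L}(t,t)\,dt$. Together these give trace-norm convergence, which suffices for weak convergence of centred Gaussian measures on a separable Hilbert space.
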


\begin{proof}
By Theorem 1 of \cite{Solin_2019}, $K_{M,L} \to K$ uniformly.  Since both processes are centred
Gaussian, this gives $g_{M,L} \to g$ in distribution on $L^2([0,T])$.  Squaring is continuous from
$L^2$ to $L^1$ and double integration is bounded linear from $L^1$ into $C([0,T])$, so the
continuous mapping theorem gives $f_{M,L} \to f$ in distribution.
\end{proof}

\clearpage

\clearpage
\section{Model Specifications}

\subsection{Accounting for Seasonal Trends Across Metrics}
\label{sec:detrending}
Three-point volume, pace, and positional usage have all changed substantially over the seasons spanned by the data. Left unmodeled, this era variation is absorbed into the aging function, and $f_k$ conflates how a player changes with how the league changed.

Let $c_p \in \{\text{G}, \text{F}, \text{C}\}$ denote the position group of player $p$ --- guard, forward, or center --- and $s_{pt}$ the calendar season in which player $p$ is aged $t$.  Writing $\bar{y}_k(c,s)$ for the observed average of metric $k$ over all players of position group $c$ in season $s$, the offset $o_{ptk} = g_k(\bar{y}_k(c,s))$ enters each metric's linear predictor through that metric's own link. For ages projected beyond the observed seasons, the offset is set to the nearest available season.

The grouping is by position and season rather than season alone because league trends such as three point shooting did not effect all positions equally. For example, big men saw a larger proportionate increase in three point attempt rate than guards. The curve
$f_k$ then represents production relative to a player's positional peers in the same season.

\subsection{Latent Embedding Prior}
\label{sec:embedding_prior}

Let
\[
  Z_p = \bigl[-\log d_p,\; h_p,\;\mathbf{1}[\mathrm{pos}_p = F],\;\mathbf{1}[\mathrm{pos}_p = G]\bigr]^\top \in \mathbb{R}^4
\]
denote four player-level age agnostic covariates: negative log draft position $-\log d_p$ (a proxy for talent), height $h_p$, and two indicator variables for positional group (Forward and Guard, with Center as the baseline category).  Players without a recorded draft position are assigned a rank of $P{+}1$, where $P$ is the total number of players in the cohort, so that $-\log d_p$ defaults to the minimum value after standardization.  We introduce a learned projection $W \in \mathbb{R}^{4 \times q}$ with scale $\sigma_W$ and a residual scale $\sigma_X$, and place the prior
\begin{align}
\label{eq:structured_prior_full}
  X_p &= Z_p^\top W + \sigma_X\, \varepsilon_p, \quad
  \varepsilon_p \sim t_\nu\!\bigl(0,\, \sqrt{(\nu-2)/\nu}\; I_q\bigr), \quad \nu = 4, \nonumber\\
  W &\sim \mathcal{N}(0,\, \sigma_W^2\, I_{4\times q}).
\end{align}
The residual is Student-$t$ rather than Gaussian to account for the probability of outlier talent. Scales are learned at the MAP and held fixed during
posterior sampling ($\sigma_X = 1.90$, $\sigma_W = 0.16$). 
The prior mean $Z_p^\top W$ pulls a player's latent position toward the subspace predicted by draft standing, physical size, and positional role; the residual scale $\sigma_X$  controls how tightly the data must pull $X_p$ away from that prior mean.  When $\sigma_W \to 0$ the prior reduces to the isotropic case $\sigma_X \varepsilon_p$.  Finally, $W$ is shared across all players and learned from the data.

\subsection{Procrustes Alignment of Posterior Draws}
\label{sec:procrustes}

Because the latent embedding $X_p \in \mathbb{R}^q$ is only identified up to orthogonal rotation of its learned coordinates, raw MCMC draws from different chains occupy different rotation frames and cannot be directly compared or averaged. There are different approaches for dealing with nonidentifiability of factors across multiple MCMC samples \citep{procrustes, poworoznek2024efficientlyresolvingrotationalambiguity}. To enable coherent posterior summaries, we apply Procrustes alignment to each (chain, draw) pair: given the MAP estimate $\hat{X}$ as a fixed reference, we find the orthogonal matrix $R^{(s)}$ minimising $\|\hat{X} - X^{(s)} R^{(s)}\|_F$ via singular value decomposition, and replace each sampled $X^{(s)}$ with the aligned version $X^{(s)} R^{(s)}$ before any downstream computation.  

\subsection{Career Duration Formulation}
\label{sec:survival_linear_map}
We model career exit with a player-specific Gompertz proportional-hazards model. We select the Gompertz model since its hazard rises monotonically with age, and has historically been used to model human aging.  Because player $p$ is only observed from entrance age $\alpha_p$ onward, the exit-time likelihood is left-truncated at $\alpha_p$ (and right-censored for players still active at the end of the observation window).  The Gompertz hazard and cumulative hazard are
\begin{equation}
    h_p(a) = \eta_p\,e^{\gamma_p a}, \qquad
    H_p(a) = \frac{\eta_p}{\gamma_p}\bigl(e^{\gamma_p a} - 1\bigr),
\end{equation}
and the left-truncated log-likelihood contributions for exit age $T_p$ are
\begin{equation}
    \ell_p =
    \begin{cases}
        \log h_p(T_p) - \bigl[H_p(T_p) - H_p(\alpha_p)\bigr] & \text{(observed exit)}, \\
        -\bigl[H_p(T_p) - H_p(\alpha_p)\bigr] & \text{(right-censored)},
    \end{cases}
    \label{eq:surv_lik}
\end{equation}
where $\log h_p(a) = \log\eta_p + \gamma_p a$ and the truncated increment is $H_p(T_p) - H_p(\alpha_p) = (\eta_p/\gamma_p)\bigl(e^{\gamma_p T_p} - e^{\gamma_p \alpha_p}\bigr)$.  Both parameters are linked to the latent embedding on the log scale, through the same feature-map projection $\phi(X_p)$ used by the performance likelihood. The latent embedding thereby ties career longevity to the same low-dimensional representation that drives on-court performance:
\begin{align}
    \log \eta_p   &= \eta_{\text{glob}} + \sigma_{\text{exit}}\,\phi(X_p)^\top \psi_{\text{exit}}, \label{eq:eta} \\
    \log \gamma_p &= \gamma_{\text{glob}} + \phi(X_p)^\top \psi_{\text{rate}}, \label{eq:gamma}
\end{align} Here $\eta_{\text{glob}} \sim \mathcal{N}(\log 0.04,\, 0.5)$ and $\gamma_{\text{glob}} \sim \mathcal{N}(\log 0.15,\, 0.3)$ are global log baseline-hazard and log aging-rate offsets, $\sigma_{\text{exit}} \sim \mathrm{HalfNormal}(0.5)$ controls the player-specific spread in baseline hazard, and $\psi_{\text{exit}} \sim \mathcal{N}(0, I)$, $\psi_{\text{rate}} \sim \mathcal{N}(0,\, 0.1^2 I)$, with dimension matching $\phi(X_p)$.  The exponential links keep $\eta_p, \gamma_p > 0$, so every player's exit hazard is increasing in age. 

\subsection{Random Fourier Feature Latent Projection}
\label{sec:rff}
 We model player-player correlation with a \emph{random Fourier feature} (RFF) expansion \citep{rffs}, which induces a stationary, automatic-relevance-determination (ARD) squared-exponential kernel over the latent space.  This is the random Fourier feature latent variable model (RFLVM) of \cite{gundersen2021latent}, together with its fully Bayesian extension \cite{zhang2023bayesiannonlinearlatentvariable}.

Draw $m$ spectral frequencies collected into $\Omega \in \mathbb{R}^{m \times q}$ with $\Omega_{ij} \sim \mathcal{N}(0,1)$, together with a per-dimension lengthscale $\ell = (\ell_1, \dots, \ell_q)$, $\ell_j \sim \mathrm{HalfNormal}(1)$.  Writing $\tilde\Omega = \Omega\,\mathrm{diag}\!\bigl(\sqrt{\ell_1}, \dots, \sqrt{\ell_q}\bigr)$, define the unit-norm feature map
\begin{equation}
  \phi(X_p) = \frac{1}{\sqrt{m}}\Bigl[\cos(\tilde\Omega X_p)^\top,\; \sin(\tilde\Omega X_p)^\top\Bigr]^\top \in \mathbb{R}^{2m},
  \qquad \lVert \phi(X_p) \rVert_2 = 1.
  \label{eq:rff_features}
\end{equation}
By Bochner's theorem the feature inner product is an unbiased Monte-Carlo estimator of the ARD squared-exponential kernel \citep{rffs},
\begin{equation}
  \phi(X_p)^\top \phi(X_{p'}) \;\xrightarrow[\;m \to \infty\;]{}\;
  k(X_p, X_{p'}) = \exp\!\Bigl(-\tfrac{1}{2}\textstyle\sum_{j=1}^{q} \ell_j\,(X_{pj} - X_{p'j})^2\Bigr),
  \label{eq:rff_kernel}
\end{equation}
so $\ell_j$ is the relevance (inverse squared bandwidth) of latent dimension $j$. The peak-age and peak-value weights become $\boldsymbol\gamma^k_{\text{age}}, \boldsymbol\gamma^k_{\text{val}} \in \mathbb{R}^{2m}$, the concave HSGP basis weights become $\mathbf{A}_k \in \mathbb{R}^{2m \times M}$, and the Gompertz links of Equations~\ref{eq:eta}--\ref{eq:gamma} specialise to
\begin{equation}
  \log \eta_p = \eta_{\text{glob}} + \sigma_{\text{exit}}\,\phi(X_p)^\top \boldsymbol\psi_{\text{exit}},
  \qquad
  \log \gamma_p = \gamma_{\text{glob}} + \phi(X_p)^\top \boldsymbol\psi_{\text{rate}},
  \label{eq:rff_surv}
\end{equation}
with $\boldsymbol\psi_{\text{exit}}, \boldsymbol\psi_{\text{rate}} \in \mathbb{R}^{2m}$. 

We keep the structured prior on $X_p$ (Equation~\ref{eq:structured_prior_full}) unchanged. The projection remains shared across all metrics and the survival model.  The frequencies $\Omega$ are drawn once and are learned rather than held at their random initialisation. We initialise $\Omega$ at its MAP estimate and continue to sample it during MCMC, while the lengthscale $\ell$ is learned at the MAP estimate and plugged in as a fixed value during posterior sampling.  We use $m = 50$ random features (hence $2m = 100$ feature dimensions) over the $q = 10$ latent dimensions. We summarise the RFF embedding through the ARD-rescaled coordinates $X_p \odot \sqrt{\ell}$, which place inter-player distances on the scale of the induced kernel~\eqref{eq:rff_kernel}.

\subsection{Model Priors}
\label{sec:model_priors}

Table~\ref{tab:prior_spec} lists the prior on each parameter, its role in the model, and whether it is sampled during MCMC or held at its MAP estimate.

\input{figs/model_plots/coverage/prior_spec.tex}

\clearpage
\section{Hilbert Space Gaussian Processes}
\label{appendix:HSGPs}
Traditionally, sampling from a full Gaussian Process is a computationally expensive procedure, involving inverting a typically large $N \times N$ covariance matrix $K$. \cite{Solin_2019} showed, however, that a reduced rank approximation of traditional stationary covariance matrices can be utilized instead, reducing computational burden. The resulting approximation represents the function $f(x)\approx \sum_{l=1}^K S(\sqrt{\lambda_l})^{1/2} \alpha_l \psi_l(x)$ as a finite, weighted, linear combination of eigenvectors, $\psi_l(x)$, of the Laplacian on the domain of $x$, with the weights $\alpha_l \sim N(0,S(\sqrt{\lambda_l}))$, arising as a function of the spectral density $S(\sqrt{\lambda_l})$ evaluated at the associated eigenvalue $\sqrt{\lambda_l}$ of the stationary covariance matrix $K$. If $\mathcal{K}$ is the standard squared exponential kernel, $S(\sqrt{\lambda_l}) = s^2\sqrt{2\pi}\omega \text{exp}(\frac{-\omega^2\lambda_l}{2})$ is the spectral density of the squared exponential with length-scale parameter $\omega$ and variance parameter $s$. As we increase the number of dimensions, this approximation does not scale well, as eigenvalues for each d-tuple are needed. As such, this approach is useful for one-dimensional cases, where the domain $\Omega$ is bounded on  $[-L, L]$. 
In the one dimensional Dirichlet boundary condition case, the eigenvalues and eigenvectors are $\lambda_j = (\frac{j \pi}{2L})^2$ and $\psi_l(x) = \sqrt{\frac{1}{L}}sin(\sqrt{\lambda_l}(L + x))$ respectively. For a full treatment of this approach, please refer to \cite{Solin_2019}.

\clearpage
\section{Approximating a Separable Kernel}
\label{appendix:separability_properties}

We make the claim that
$g(x,t) \approx \sum_{i=1}^M \phi_i(x) ( \sum_{l=1}^K \alpha_l^i \psi_l(t))$ has a separable covariance structure in $x$ and $t$. That is, $Cov( g(x,t) , g(x^{'},t^{'} ))$ can be factored into a product of kernels $K(t,t^{'})K(x,x^{'})$. 

First, we assume that $\alpha_l^i \sim N(0, S(\sqrt{\lambda_l}))$. Furthermore,
\begin{align}
\label{eq:cov_separable}
Cov(\alpha_l^i, \alpha_z^k) = 
\begin{cases}
S(\sqrt{\lambda_l})& \text{if} \quad l = z, k = i \\
0 & \text{o.w}
\end{cases}
\end{align}
These assumptions align with the formulation outlined in \cite{Solin_2019}.

First, we will evaluate $Cov(\beta_i(t), \beta_i(t^{'}))$,
where $\beta_i(t) = \sum_{l=1}^K \alpha_l^i \psi_l(t)$.

\begin{align}
    Cov(\beta_i(t), \beta_i(t^{'})) = Cov(\sum_{l=1}^K \alpha_l^i \psi_l(t), \sum_{l=1}^K \alpha_l^i \psi_l(t^{'})) 
\end{align}
Using \eqref{eq:cov_separable}, we reach that 
\begin{align}
    Cov(\beta_i(t), \beta_i(t^{'})) = \sum_{l=1}^K  \psi_l(t^{'}) \psi_l(t) S(\sqrt{\lambda_l})
\end{align}
From \cite{Solin_2019}, $\sum_{l=1}^K  \psi_l(t^{'}) \psi_l(t) S(\sqrt{\lambda_l}) = \tilde{K}(t,t^{'})$ converges to $ K(t, t^{'})$ uniformly in the univariate case. Since we only consider the time dimension, this suffices for our purposes. 

Now, we can evaluate $Cov( g(x,t) , g(x^{'},t^{'} ))$.
\begin{align}
    Cov( g(x,t) , g(x^{'},t^{'} )) = Cov(\sum_{i=1}^M \phi_i(x)  \beta_i(t) , \sum_{i=1}^M \phi_i(x^{'})  \beta_i(t^{'}) )
\end{align}
Note that from \eqref{eq:cov_separable},
\begin{align}
\label{eq:cov_separable_2}
Cov(\beta_i(t), \beta_j(t)) = 
\begin{cases}
\sum_{l=1}^K S(\sqrt{\lambda_l}) \psi_l(t)^2& \text{if} \quad j = i \\
0 & \text{o.w}
\end{cases}
\end{align}
Thus,
\begin{align}
    Cov\!\left( g(x,t),\, g(x^{'},t^{'}) \right)
    &= \sum_{i=1}^M \phi(x)\phi(x^{'}) Cov\!\left(\beta_i(t), \beta_i(t^{'})\right) \nonumber\\
    &= \tilde{K}(t,t') \sum_{i=1}^M \phi(x)\phi(x^{'})
\end{align}
{\sloppy In our implementation, $\phi_i(x)$ is a linear (identity) projection, so $\sum_{i=1}^M \phi(x)\phi(x^{'}) = K_{\mathrm{lin}}(x,x^{'})$ exactly. The product $\tilde{K}(t,t^{'}) K_{\mathrm{lin}}(x,x^{'})$ therefore gives an exact separable covariance structure in the $x$ dimension, while the HSGP approximation $\tilde{K}(t,t^{'})$ converges uniformly to $K(t,t^{'})$ from \cite{Solin_2019}.\par}

{\sloppy For the random Fourier feature projection (Appendix~\ref{sec:rff}), the $x$-factor $\sum_{i=1}^{2m}\phi_i(x)\phi_i(x^{'}) = \phi(x)^\top\phi(x^{'})$ is instead a random-feature approximation of the ARD kernel that converges as the number of features grows \citep{rffs}.\par}

\clearpage

\section{MCMC Convergence Diagnostics}
\label{sec:convergence}

Table~\ref{tab:convergence} summarizes split-$\hat{R}$ and tail effective sample size for the peak ages, peak values, concave curves and latent embeddings.

\input{figs/nba_convex_max_tvrflvm_AR_posyear_long/stratified_next_k/mcmc/plots/mcmc/convergence_table_paper.tex}

\clearpage
\section{Predictive Model Comparison}
\label{sec:elppd_methodology}

\subsection{The value of joint modelling}
Restricting the model to only use OBPM for predictive performance worsens the model's performance when compared to OBPM predicted from using all metrics (Table~\ref{tab:obpm_ablation}).
\input{figs/model_plots/coverage/obpm_ablation.tex}

\clearpage
\section{Model Validation}
\label{sec:model_validation}
\label{sec:appendix_validation}

The five holdout schemes are implemented by selecting a 20\% subset of players and masking their observations as follows. 

\begin{description}
  \item[Hold-out Last $k$ ($k{=}2$, 727 obs).] The final 2 observed seasons of each selected player are excluded from the training likelihood. The model therefore sees the player's career up to but not including their most recent 2 seasons, then is scored on those withheld seasons. This  mimics the real task of forecasting a player's next 1--2 seasons from their career to date.

  \item[Hold-out First $k$ ($k{=}2$, 757 obs).] The first 2 observed seasons of each selected player are withheld. The model sees seasons 3 onward and must recover the player's rookie and sophomore output. This tests whether the concave prior and latent embedding generalise backward in time and whether the model is well-calibrated for players with sparse early-career histories.

  \item[Random Interior (413 obs).] For each selected player, a random 20\% of their interior seasons (all but the first and last) are withheld. Unlike the last-$k$ and first-$k$ schemes, the model has observations on both sides of every gap, so this tests interpolation rather than extrapolation. 

  \item[Hold-out Peak (450 obs).] The single season of peak observed production for each selected player is withheld. This directly evaluates whether the model's explicit peak parameterisation ($t^*$, $f^*$) allows it to recover the peak even when it is unobserved.

  \item[Stratified Next $k$ ($k{=}2$, 1{,}484 obs).] A balanced forecasting scheme that controls for career stage. Players are first binned into six career-length cohorts by their maximum observed age, using the thresholds $\geq 25, 27, 29, 31, 33, 35$ (each player assigned to the highest cohort it qualifies for), with corresponding hold-out start ages of $23, 25, 27, 29, 31, 33$. For every selected player the entire career tail from its cohort's start age onward is removed from the training likelihood and predictions are scored on the next $k$ withheld seasons, i.e.\ the cohort's two-year window (ages 23--24, 25--26, $\dots$, 33--34). Unlike Hold-out Last $k$, whose scored seasons cluster at each player's individual career end, this scheme evaluates the next-two-season forecast evenly across the full $23$--$34$ age range, isolating how forecast quality varies by career stage.
\end{description}

\clearpage
\section{Latent Dimension Ablation}
\label{sec:ablation_q}

To select the latent dimension $q$ we fit the model for $q \in \{5, 10, 15, 20\}$, holding all other hyperparameters fixed, and scored each fit on held-out data under four of the holdout schemes of Appendix~\ref{sec:model_validation} (Hold-out Last $k$, Hold-out First $k$, Random Interior and Hold-out Peak).  Table~\ref{tab:ablation_rmse_holdout_last_k} reports holdout RMSE with bias for every metric and value of $q$, one block per scheme.

The ablation was carried out with MAP fits of the model under each holdout scheme rather than with full posterior sampling.

\input{figs/model_plots/coverage/ablation_holdout_body.tex}

\clearpage
\section{Additional Figures and Tables}

\begin{figure}[tbp]
    \centering
    \includegraphics[width=0.6\linewidth]{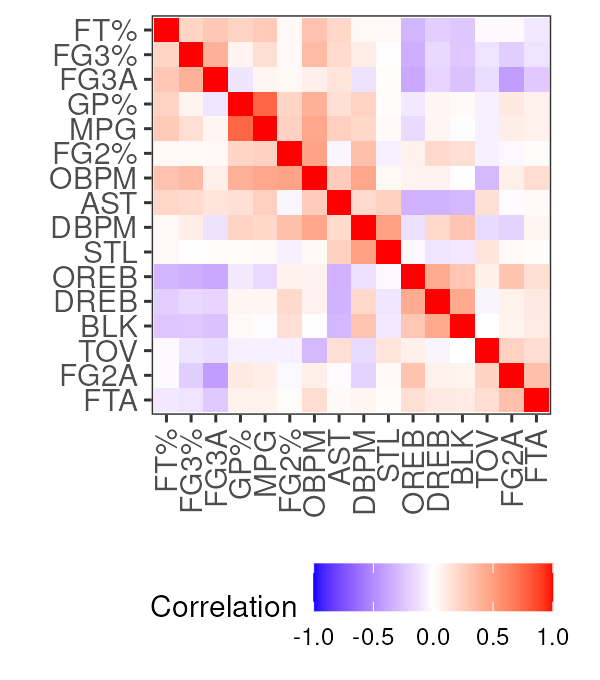}
    \caption{Between-player metric correlation.}
    \label{fig:between_cor}
\end{figure}

\begin{figure}[tbp]
    \centering
    \includegraphics[width=0.55\linewidth]{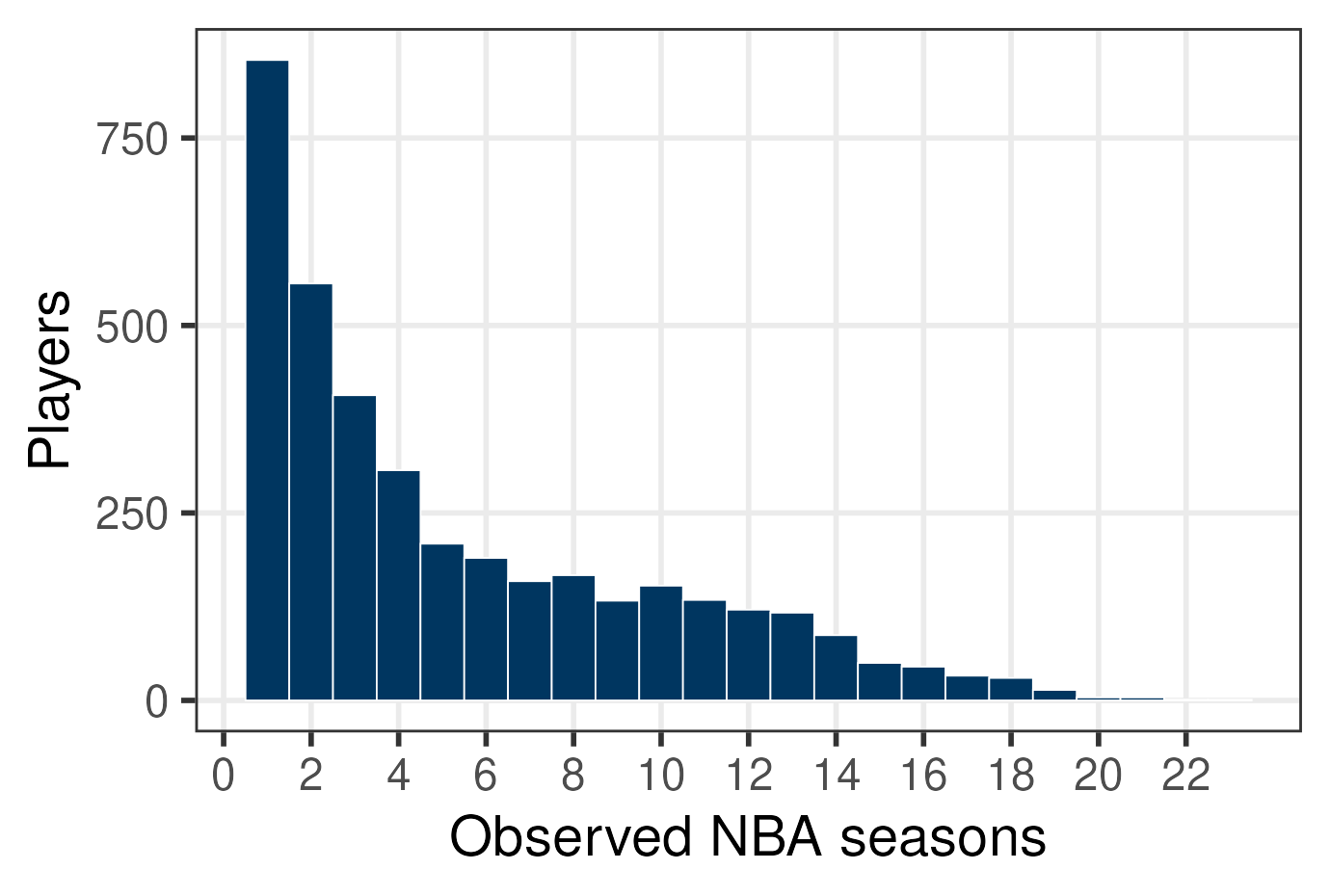}
    \caption{Number of observed NBA seasons per player in our data window. The distribution is heavily right-skewed. The mode of career lengths lasts a single season, 37\% of players are observed for two seasons or fewer, and the median career is four seasons against a maximum of 23.  Exit from the sample is strongly concentrated among short, low-production careers.}
    \label{fig:nseasons_hist}
\end{figure}

\begin{figure}[tbp]
  \centering
  \includegraphics[width=\linewidth]{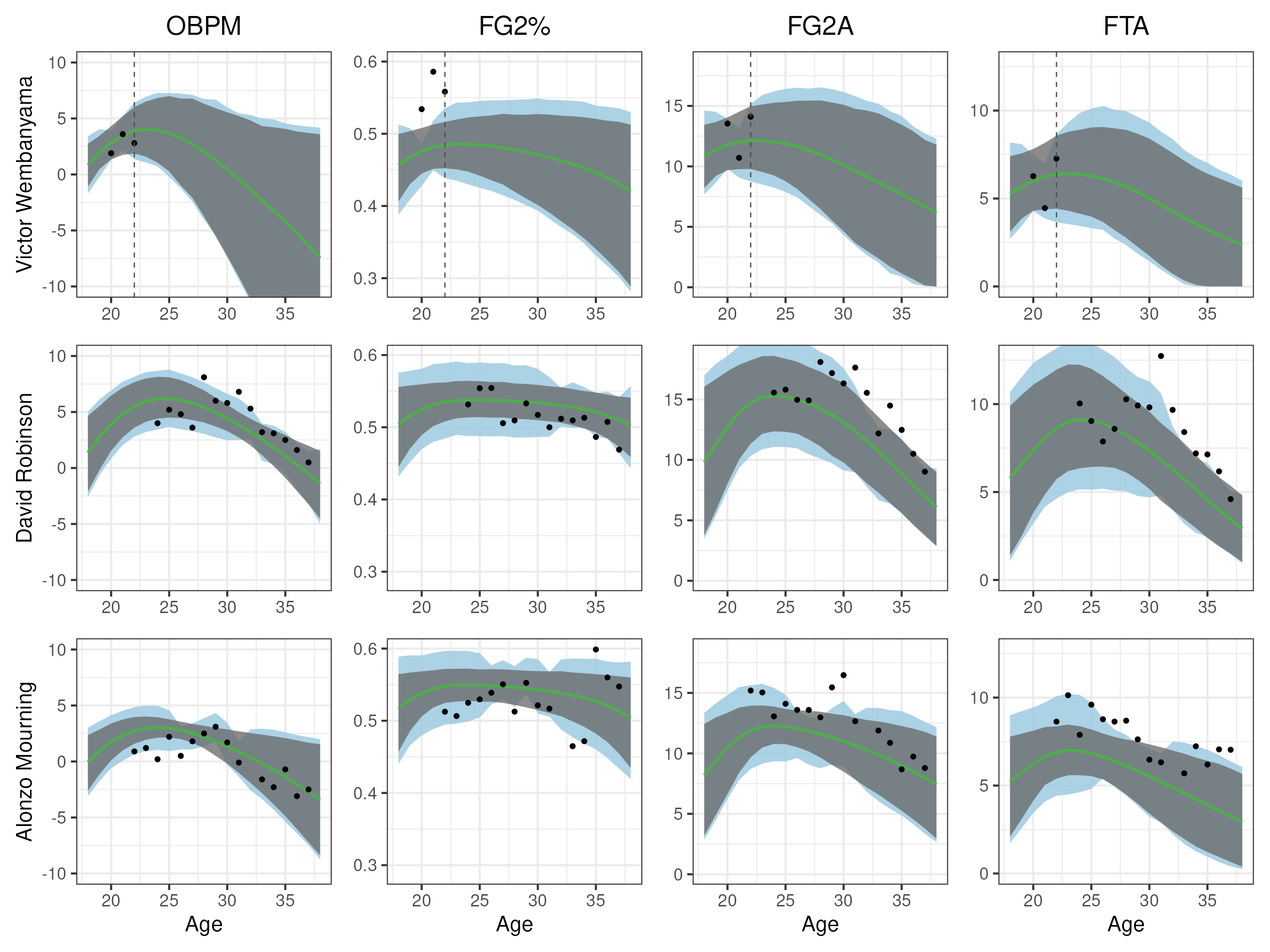}
  \caption{Posterior production curves for Victor Wembanyama (top row) and his two nearest latent-space neighbors, David Robinson (middle row) and Alonzo Mourning (bottom row), with the four representative metrics (OBPM, FG2\%, FG2A, FTA) as columns.  The dashed vertical line in the top row marks Wembanyama's last observed season; the curve beyond it is a forecast.  The posterior mean of the aging curve $f_k(X_p, t)$ is the green line, with its 95\% credible band in gray; the light-blue band is the 95\% interval for the full latent trajectory $f_k(X_p,t) + r^k_p(t)$, which adds the per-player AR(1) term. Observed values are black points.}
  \label{fig:posterior_curves}
\end{figure}

%% file: figs/model_plots/coverage/prior_spec.tex
\begin{table}[tbp]
    \centering
    \resizebox{\linewidth}{!}{%
    \begin{tabular}{llll}
        \toprule
        Parameter & Prior & Role & In MCMC \\
        \midrule
        $\sigma_W$ & $\mathrm{HalfNormal}(1)$ & Scale of the covariate projection $W$ & MAP \\
        $W$ & $\mathcal{N}(0, 1)$ & Covariate projection to the latent mean $Z_p^\top W$ & sampled \\
        $\sigma_X$ & $\mathrm{HalfNormal}(0.5)$ & Latent residual scale & MAP \\
        $\varepsilon_p$ & $t_{4}(0,\; \sqrt{1/2})$ & Latent residual, $X_p = Z_p^\top W + \sigma_X\,\varepsilon_p$ & sampled \\
        $\ell_j$ & $\mathrm{HalfNormal}(1)$ & Latent-kernel ARD lengthscale, per latent dimension & MAP \\
        $\Omega$ & $\mathcal{N}(0, 1)$ & Random Fourier frequencies & sampled \\
        $\boldsymbol\gamma^k_{\text{val}}$ & $\mathcal{N}(0, 1)$ & Peak-value weights, $f^{*}_k(X_p)$ & sampled \\
        $\boldsymbol\gamma^k_{\text{age}}$ & $\mathcal{N}(0, 1)$ & Peak-age weights, $t^{*}_k(X_p)$ & sampled \\
        $\sigma_c$ & $\mathcal{IG}(2, 1)$ & Peak-value process scale & MAP \\
        $\sigma_t$ & $\mathrm{HalfNormal}(0.5)$ & Peak-age process scale & sampled \\
        $\mathbf{A}_k$ & $\mathcal{N}(0, 1)$ & HSGP curvature weights, entries scaled by $\sqrt{S(\sqrt{\lambda_l})}$ & sampled \\
        $\alpha_k$ & $\mathrm{HalfNormal}(1)$ & Curvature amplitude & sampled \\
        $\ell_k$ & $\log\mathcal{N}(\log 3, 0.7)$ & Curvature-kernel lengthscale & sampled \\
        $\rho_k$ & $\mathrm{Uniform}(-0.5, 0.5)$ & AR(1) autocorrelation of $r^k_p(t)$ & sampled \\
        $\sigma^k_{\mathrm{AR}}$ & $\mathrm{HalfNormal}(0.3)$ & AR(1) innovation scale & sampled \\
        $r^k_p(t)$ & $\mathcal{N}(0, 1)$ & AR(1) innovations (non-centred) & sampled \\
        $r^k_p(t_0)$ & $\mathcal{N}(0, 1)$ & AR(1) initial state (non-centred) & sampled \\
        $\sigma^2$ & $\mathcal{IG}(5, 600)$ & Gaussian observation variance, $k \in \mathcal{G}$ & sampled \\
        $\tau_k^2,\ k \in \mathcal{P}$ & $\mathrm{Uniform}(0, 1)$ & Beta precision & sampled \\
        $\tau_k^2,\ k \in \mathcal{B}$ & $\mathrm{Exponential}(1)$ & Beta-binomial precision & sampled \\
        $\tau_k^2,\ k \in \mathcal{C}$ & $\mathrm{Exponential}(1)$ & Negative-binomial concentration & sampled \\
        $\eta_{\text{glob}}$ & $\mathcal{N}(\log 0.04,\; 0.5^2)$ & Global log baseline hazard & sampled \\
        $\gamma_{\text{glob}}$ & $\mathcal{N}(\log 0.15,\; 0.3^2)$ & Global log aging rate & sampled \\
        $\sigma_{\text{exit}}$ & $\mathrm{HalfNormal}(0.5)$ & Baseline-hazard latent spread & MAP \\
        $\psi_{\text{exit}}$ & $\mathcal{N}(0, 1)$ & Log baseline-hazard weights & sampled \\
        $\psi_{\text{rate}}$ & $\mathcal{N}(0, 0.1^2)$ & Log aging-rate weights & sampled \\
        \bottomrule
    \end{tabular}}%
    \caption{
    $\mathcal{IG}(\alpha,\beta)$ is an Inverse-Gamma with shape $\alpha$ and rate $\beta$;
    $t_\nu(0, s)$ is a Student-$t$ with $\nu$ degrees of freedom and scale $s$.  Weight sites are
    resolved in the random-feature space, so the peak weights are $2m = 100$-dimensional per
    metric and the curvature weights $\mathbf{A}_k$ are $2m \times M$; the latent coordinates
    themselves are $q = 10$-dimensional.  The final column records whether a site is sampled
    during MCMC or held fixed at its MAP estimate.}
    \label{tab:prior_spec}
\end{table}

%% file: figs/nba_convex_max_tvrflvm_AR_posyear_long/stratified_next_k/mcmc/plots/mcmc/convergence_table_paper.tex
\begin{table}[htbp]
    \centering\small
    \begin{tabular}{lrrr}
        \toprule
        Quantity & $n$ & median $\hat{R}$ & median tail-ESS \\
        \midrule
        $t^{*}$ (peak age) & 46,767 & 1.077 & 144 \\
        $f^{*}$ (peak value) & 46,767 & 1.029 & 983 \\
        $\mu_{ptk}$ (concave curve) & 982,107 & 1.040 & 421 \\
        $\lVert X_p \rVert$ (latent) & 2,751 & 1.049 & 367 \\
        \bottomrule
    \end{tabular}
    \caption{Convergence diagnostics over 50,400 posterior draws (4 chains). $n$ is the number of cells summarised. Entries are medians over those $n$ cells. $\hat{R}$ is the rank-normalized split-$\hat{R}$ of \citet{vehtari2021rank}. Tail-ESS is the smaller of the effective sample sizes of the indicators for the 5\% and 95\% quantiles \citep{vehtari2021rank}.}
    \label{tab:convergence}
\end{table}

%% file: figs/model_plots/coverage/obpm_ablation.tex
\begin{table}[htbp]
\centering
\small
\begin{tabular}{lc}
\toprule
Restriction imposed & Stratified Next $k$ \\
\midrule
concavity ($\mu'' \leq 0$) & $+226 \pm 73$$^{\dagger}$ \\
constant curvature per side & $-5 \pm 16$ \\
symmetry ($\alpha_u = \alpha_d$) & $-68 \pm 18$$^{\dagger}$ \\
\textbf{metric set: 17 $\to$ 1} & $-355 \pm 108$$^{\dagger}$ \\
\bottomrule
\end{tabular}
\caption{  Every entry is the change in held-out log predictive density on OBPM alone; $\dagger$ marks $|\Delta| > 1.96\,\mathrm{se}_{\Delta}$ with the paired standard error of \citet{vehtari2017practical}.  Negative means the restriction costs accuracy.  The first three rows are the curve-shape ladder of Table~2 of the main text  on OBPM observations rather than pooled across metrics.  The last row is the identical model, priors, de-trend and holdout fit to OBPM alone.  Results are for the Stratified Next $k$ scheme.}
\label{tab:obpm_ablation}
\end{table}

%% file: figs/model_plots/coverage/ablation_holdout_body.tex
\begin{table}[tbp]
\centering
\small
\resizebox{\linewidth}{!}{
\begin{tabular}{lrrrr}
\toprule
Metric & $q=5$ & $q=10$ & $q=15$ & $q=20$ \\
  & \small RMSE (bias) & \small RMSE (bias) & \small RMSE (bias) & \small RMSE (bias) \\
\midrule
GP\% & 0.348 (-0.327) & 0.365 (-0.346) & \textbf{0.324 (-0.284)} & 0.329 (-0.261) \\
USG\% & 0.055 (-0.004) & 0.052 (+0.008) & \textbf{0.049 (+0.001)} & 0.055 (-0.011) \\
MPG & 0.172 (+0.053) & \textbf{0.138 (+0.038)} & 0.138 (+0.040) & 0.140 (+0.003) \\
OBPM & 2.722 (+1.073) & \textbf{1.950 (+0.437)} & 2.299 (+0.931) & 2.300 (+0.728) \\
DBPM & 1.206 (+0.121) & 1.054 (+0.060) & \textbf{0.992 (+0.152)} & 1.050 (+0.192) \\
BLK & 0.282 (+0.241) & 0.280 (+0.232) & 0.320 (+0.221) & \textbf{0.224 (+0.173)} \\
STL & 0.871 (+0.859) & \textbf{0.768 (+0.763)} & 0.882 (+0.852) & 0.855 (+0.797) \\
AST & 1.058 (+1.052) & \textbf{0.999 (+0.919)} & 1.017 (+0.944) & 1.603 (+1.414) \\
DREB & 1.017 (+0.087) & \textbf{1.002 (+0.361)} & 1.315 (+0.376) & 1.346 (+0.322) \\
OREB & 0.632 (+0.554) & 0.616 (+0.558) & \textbf{0.500 (+0.482)} & 0.769 (+0.718) \\
TOV & 0.675 (+0.096) & \textbf{0.547 (+0.217)} & 0.626 (+0.188) & 0.601 (+0.149) \\
FTA & 2.041 (+1.823) & 1.996 (+1.904) & 2.147 (+1.927) & \textbf{1.829 (+1.661)} \\
FG2A & 3.696 (+1.350) & 2.637 (+1.008) & \textbf{2.329 (+0.775)} & 2.443 (+0.654) \\
FG3A & 1.034 (+0.636) & \textbf{0.803 (+0.446)} & 0.861 (+0.227) & 1.563 (+0.596) \\
FT\% & 0.367 (-0.117) & 0.375 (-0.104) & \textbf{0.354 (-0.092)} & 0.420 (-0.116) \\
FG2\% & 0.365 (+0.364) & \textbf{0.202 (+0.202)} & 0.406 (+0.406) & 0.497 (+0.496) \\
FG3\% & 0.326 (+0.243) & 0.323 (+0.236) & 0.328 (+0.235) & \textbf{0.304 (+0.202)} \\
\bottomrule
\end{tabular}
}
\caption{Holdout RMSE and bias (in parentheses) by metric and latent dimension $q$ --- Hold-out Last $k$ scheme. Best $q$ per metric in bold.}
\label{tab:ablation_rmse_holdout_last_k}
\end{table}

%% file: main.bbl
\begin{thebibliography}{35}
\providecommand{\natexlab}[1]{#1}
\providecommand{\url}[1]{\texttt{#1}}
\expandafter\ifx\csname urlstyle\endcsname\relax
  \providecommand{\doi}[1]{doi: #1}\else
  \providecommand{\doi}{doi: \begingroup \urlstyle{rm}\Url}\fi

\bibitem[Allen and Hopkins(2015)]{allen2015age}
Sian~V Allen and William~G Hopkins.
\newblock Age of peak competitive performance of elite athletes: A systematic
  review.
\newblock \emph{Sports Medicine}, 45\penalty0 (10):\penalty0 1431--1441, 2015.

\bibitem[Andersen et~al.(2018)Andersen, Siivola, Riutort-Mayol, and
  Vehtari]{convexityconstraints}
Michael~Riis Andersen, Eero Siivola, Gabriel Riutort-Mayol, and Aki Vehtari.
\newblock A non-parametric probabilistic model for monotonic functions.
\newblock All of Bayesian Nonparametrics Workshop (BNP@NeurIPS), 2018.

\bibitem[Aßmann et~al.(2016)Aßmann, Boysen-Hogrefe, and Pape]{procrustes}
Christian Aßmann, Jens Boysen-Hogrefe, and Markus Pape.
\newblock Bayesian analysis of static and dynamic factor models: An ex-post
  approach towards the rotation problem.
\newblock \emph{Journal of Econometrics}, 192\penalty0 (1):\penalty0 190--206,
  2016.
\newblock ISSN 0304-4076.
\newblock \doi{https://doi.org/10.1016/j.jeconom.2015.10.010}.
\newblock URL
  \url{https://www.sciencedirect.com/science/article/pii/S0304407615002626}.

\bibitem[Berry et~al.(1999)Berry, Reese, and Larkey]{berry1999bridging}
Scott~M Berry, C~Shane Reese, and Patrick~D Larkey.
\newblock Bridging different eras in sports.
\newblock \emph{Journal of the American Statistical Association}, 94\penalty0
  (447):\penalty0 661--676, 1999.

\bibitem[Bonilla et~al.(2008)Bonilla, Chai, and Williams]{NIPS2007_66368270}
Edwin~V Bonilla, Kian Chai, and Christopher Williams.
\newblock Multi-task gaussian process prediction.
\newblock In J.~Platt, D.~Koller, Y.~Singer, and S.~Roweis, editors,
  \emph{Advances in Neural Information Processing Systems}, volume~20. Curran
  Associates, Inc., 2008.
\newblock URL
  \url{https://proceedings.neurips.cc/paper/2007/file/66368270ffd51418ec58bd793f2d9b1b-Paper.pdf}.

\bibitem[Bradbury(2009)]{bradbury2009peak}
John~Charles Bradbury.
\newblock Peak athletic performance and ageing: Evidence from baseball.
\newblock \emph{Journal of Sports Sciences}, 27\penalty0 (6):\penalty0
  599--610, 2009.

\bibitem[Chessa et~al.(2023)Chessa, D’Urso, De~Giovanni,
  et~al.]{chessa2023complex}
Antonio Chessa, Pierpaolo D’Urso, Luigi De~Giovanni, et~al.
\newblock Complex networks for community detection of basketball players.
\newblock \emph{Annals of Operations Research}, 325:\penalty0 363--389, 2023.
\newblock \doi{10.1007/s10479-022-04647-x}.
\newblock URL \url{https://doi.org/10.1007/s10479-022-04647-x}.

\bibitem[Ericsson et~al.(1993)Ericsson, Krampe, and
  Tesch-R{\"o}mer]{ericsson1993deliberate}
K~Anders Ericsson, Ralf~T Krampe, and Clemens Tesch-R{\"o}mer.
\newblock The role of deliberate practice in the acquisition of expert
  performance.
\newblock \emph{Psychological Review}, 100\penalty0 (3):\penalty0 363--406,
  1993.

\bibitem[Fair(2007)]{fair2007estimated}
Ray~C Fair.
\newblock Estimated age effects in athletic events and chess.
\newblock \emph{Experimental Aging Research}, 33\penalty0 (1):\penalty0 37--57,
  2007.

\bibitem[Gundersen et~al.(2021)Gundersen, Zhang, and
  Engelhardt]{gundersen2021latent}
Gregory~W. Gundersen, Michael~Minyi Zhang, and Barbara~E. Engelhardt.
\newblock Latent variable modeling with random features.
\newblock In \emph{Proceedings of the 24th International Conference on
  Artificial Intelligence and Statistics}, volume 130 of \emph{Proceedings of
  Machine Learning Research}, pages 1333--1341. PMLR, 2021.
\newblock URL \url{https://proceedings.mlr.press/v130/gundersen21a.html}.

\bibitem[Lailvaux et~al.(2014)Lailvaux, Wilson, and
  Kasumovic]{lailvaux2014trait}
Simon~P Lailvaux, Robbie Wilson, and Michael~M Kasumovic.
\newblock Trait compensation and sex-specific aging of performance in male and
  female professional basketball players.
\newblock \emph{Evolution}, 68\penalty0 (5):\penalty0 1523--1532, 2014.

\bibitem[Lawrence(2005)]{JMLR:v6:lawrence05a}
Neil Lawrence.
\newblock Probabilistic non-linear principal component analysis with gaussian
  process latent variable models.
\newblock \emph{Journal of Machine Learning Research}, 6\penalty0
  (60):\penalty0 1783--1816, 2005.
\newblock URL \url{http://jmlr.org/papers/v6/lawrence05a.html}.

\bibitem[Macnamara et~al.(2014)Macnamara, Hambrick, and
  Oswald]{macnamara2014deliberate}
Brooke~N Macnamara, David~Z Hambrick, and Frederick~L Oswald.
\newblock Deliberate practice and performance in music, games, sports,
  education, and professions: A meta-analysis.
\newblock \emph{Psychological Science}, 25\penalty0 (8):\penalty0 1608--1618,
  2014.

\bibitem[Myers(2020)]{myers2020bpm2}
Daniel Myers.
\newblock About box plus/minus (bpm).
\newblock Basketball-Reference.com, 2020.
\newblock URL \url{https://www.basketball-reference.com/about/bpm2.html}.
\newblock Accessed: 2026-08-26.

\bibitem[Page et~al.(2007)Page, Fellingham, and Reese]{page2007using}
Garritt~L Page, Gilbert~W Fellingham, and C~Shane Reese.
\newblock Using box-scores to determine a position's contribution to winning
  basketball games.
\newblock \emph{Journal of Quantitative Analysis in Sports}, 3\penalty0 (4),
  2007.

\bibitem[Page et~al.(2013)Page, Barney, and McGuire]{page2013effect}
Garritt~L Page, Bradley~J Barney, and Aaron~T McGuire.
\newblock Effect of position, usage rate, and per game minutes played on {NBA}
  player production curves.
\newblock \emph{Journal of Quantitative Analysis in Sports}, 9\penalty0
  (4):\penalty0 337--345, 2013.

\bibitem[Phan et~al.(2019)Phan, Pradhan, and Jankowiak]{numpyro}
Du~Phan, Neeraj Pradhan, and Martin Jankowiak.
\newblock Composable effects for flexible and accelerated probabilistic
  programming in numpyro.
\newblock \emph{arXiv preprint arXiv:1912.11554}, 2019.

\bibitem[Poworoznek et~al.(2024)Poworoznek, Anceschi, Ferrari, and
  Dunson]{poworoznek2024efficientlyresolvingrotationalambiguity}
Evan Poworoznek, Niccolo Anceschi, Federico Ferrari, and David Dunson.
\newblock Efficiently resolving rotational ambiguity in bayesian matrix
  sampling with matching, 2024.
\newblock URL \url{https://arxiv.org/abs/2107.13783}.

\bibitem[Rahimi and Recht(2007)]{rffs}
Ali Rahimi and Benjamin Recht.
\newblock Random features for large-scale kernel machines.
\newblock In J.~Platt, D.~Koller, Y.~Singer, and S.~Roweis, editors,
  \emph{Advances in Neural Information Processing Systems}, volume~20. Curran
  Associates, Inc., 2007.
\newblock URL
  \url{https://proceedings.neurips.cc/paper_files/paper/2007/file/013a006f03dbc5392effeb8f18fda755-Paper.pdf}.

\bibitem[Salthouse(2009)]{salthouse2009when}
Timothy~A Salthouse.
\newblock When does age-related cognitive decline begin?
\newblock \emph{Neurobiology of Aging}, 30\penalty0 (4):\penalty0 507--514,
  2009.

\bibitem[Schulz and Curnow(1988)]{schulz1988peak}
Richard Schulz and Christine Curnow.
\newblock Peak performance and age among superathletes: Track and field,
  swimming, baseball, tennis, and golf.
\newblock \emph{Journal of Gerontology}, 43\penalty0 (5):\penalty0 P113--P120,
  1988.

\bibitem[Schulz et~al.(1994)Schulz, Musa, Staszewski, and
  Siegler]{schulz1994relationship}
Richard Schulz, Donald Musa, James Staszewski, and Robert~S Siegler.
\newblock The relationship between age and major league baseball performance:
  Implications for development.
\newblock \emph{Psychology and Aging}, 9\penalty0 (2):\penalty0 274--286, 1994.

\bibitem[Silver(2019)]{natesilver538_2019}
Nate Silver.
\newblock How our {RAPTOR} metric works.
\newblock
  \url{https://fivethirtyeight.com/features/how-our-raptor-metric-works/}, Oct
  2019.

\bibitem[Silver and Fischer-Baum(2015)]{natesilver538_2015}
Nate Silver and Reuben Fischer-Baum.
\newblock We're predicting the career of every {NBA} player. {H}ere's how.
\newblock
  \url{https://fivethirtyeight.com/features/how-were-predicting-NBA-player-career/},
  Oct 2015.

\bibitem[Simonton(1988)]{simonton1988age}
Dean~Keith Simonton.
\newblock Age and outstanding achievement: What do we know after a century of
  research?
\newblock \emph{Psychological Bulletin}, 104\penalty0 (2):\penalty0 251--267,
  1988.

\bibitem[Solin and Särkkä(2019)]{Solin_2019}
Arno Solin and Simo Särkkä.
\newblock Hilbert space methods for reduced-rank gaussian process regression.
\newblock \emph{Statistics and Computing}, 30\penalty0 (2):\penalty0 419–446,
  August 2019.
\newblock ISSN 1573-1375.
\newblock \doi{10.1007/s11222-019-09886-w}.
\newblock URL \url{http://dx.doi.org/10.1007/s11222-019-09886-w}.

\bibitem[Tanaka and Seals(2008)]{tanaka2008endurance}
Hirofumi Tanaka and Douglas~R Seals.
\newblock Endurance exercise performance in masters athletes: Age-associated
  changes and underlying physiological mechanisms.
\newblock \emph{The Journal of Physiology}, 586\penalty0 (1):\penalty0 55--63,
  2008.

\bibitem[Terner and Franks(2021)]{terner2020modeling}
Zachary Terner and Alexander Franks.
\newblock Modeling player and team performance in basketball.
\newblock \emph{Annual Review of Statistics and Its Application}, 8:\penalty0
  1--23, 2021.

\bibitem[Vaci et~al.(2019)Vaci, Coci{\'c}, Gula, and
  Bilali{\'c}]{vaci2019large}
Nemanja Vaci, Dijana Coci{\'c}, Bartosz Gula, and Merim Bilali{\'c}.
\newblock Large data and bayesian modeling—aging curves of {NBA} players.
\newblock \emph{Behavior {R}esearch {M}ethods}, pages 1--21, 2019.

\bibitem[Vehtari et~al.(2017)Vehtari, Gelman, and Gabry]{vehtari2017practical}
Aki Vehtari, Andrew Gelman, and Jonah Gabry.
\newblock Practical bayesian model evaluation using leave-one-out
  cross-validation and {WAIC}.
\newblock \emph{Statistics and Computing}, 27\penalty0 (5):\penalty0
  1413--1432, 2017.

\bibitem[Vehtari et~al.(2021)Vehtari, Gelman, Simpson, Carpenter, and
  B\"{u}rkner]{vehtari2021rank}
Aki Vehtari, Andrew Gelman, Daniel Simpson, Bob Carpenter, and Paul-Christian
  B\"{u}rkner.
\newblock Rank-normalization, folding, and localization: An improved
  $\widehat{R}$ for assessing convergence of {MCMC} (with discussion).
\newblock \emph{Bayesian Analysis}, 16\penalty0 (2):\penalty0 667--718, 2021.

\bibitem[Vinu{\'e} and Epifanio(2019)]{vinue2019forecasting}
Guillermo Vinu{\'e} and Irene Epifanio.
\newblock Forecasting basketball players' performance using sparse functional
  data.
\newblock \emph{Statistical Analysis and Data Mining: The ASA Data Science
  Journal}, 2019.

\bibitem[Vinu{\'e} et~al.(2015)Vinu{\'e}, Epifanio, and
  Alemany]{vinue2015archetypoids}
Guillermo Vinu{\'e}, Irene Epifanio, and Sandra Alemany.
\newblock Archetypoids: A new approach to define representative archetypal
  data.
\newblock \emph{Computational Statistics \& Data Analysis}, 87:\penalty0
  102--115, 2015.

\bibitem[Wakim and Jin(2014)]{wakim2014functional}
Alexander Wakim and Jimmy Jin.
\newblock Functional data analysis of aging curves in sports.
\newblock \emph{arXiv preprint arXiv:1403.7548}, 2014.

\bibitem[Zhang et~al.(2023)Zhang, Gundersen, and
  Engelhardt]{zhang2023bayesiannonlinearlatentvariable}
Michael~Minyi Zhang, Gregory~W. Gundersen, and Barbara~E. Engelhardt.
\newblock Bayesian non-linear latent variable modeling via random fourier
  features, 2023.
\newblock URL \url{https://arxiv.org/abs/2306.08352}.

\end{thebibliography}
